%% file: main.tex
\documentclass[11pt,a4paper]{article}

\usepackage[a4paper, portrait, margin=2.5cm]{geometry}

\usepackage{amsmath,amssymb,amsthm}
\usepackage{graphicx}
\graphicspath{{./}}
\usepackage{booktabs}
\usepackage[colorlinks=true, citecolor=blue, linkcolor=blue, urlcolor=blue]{hyperref}
\usepackage{authblk}
\usepackage{mathpazo}
\usepackage{microtype}
\usepackage{float}
\usepackage{bm}

\newtheorem{theorem}{Theorem}

\newtheorem{corollary}[theorem]{Corollary}
\newtheorem{proposition}[theorem]{Proposition}
\theoremstyle{remark}
\newtheorem{remark}{Remark}

\begin{document}

\input{dynamic_variables.tex}
\input{dynamic_variables_supplemental.tex}

\title{\LARGE\bfseries A Unified Variational Principle for Branching Transport Networks:\\[4pt]
Wave Impedance, Viscous Flow, and Tissue Metabolism}
\author{Riccardo Marchesi}
\affil{University of Pavia}
\date{\today}

\maketitle

\begin{abstract}
The branching geometry of biological transport networks is canonically
characterized by a diameter scaling exponent $\alpha$. Traditionally, this
exponent interpolates between two structural attractors: impedance matching
($\alpha \sim 2$) for pulsatile wave propagation and viscous-metabolic
minimization ($\alpha=3$) for steady flow. We demonstrate that neither
mechanism in isolation can predict the empirically observed
$\alpha_{\mathrm{exp}}=2.70 \pm 0.20$ in mammalian arterial trees. Incorporating
the empirical sub-linear vessel-wall scaling $h(r) \propto r^p$ ($p=\VarP$) into a
three-term metabolic cost function rigorously breaks the universality of
Murray's cubic law --- a consequence of cost-function inhomogeneity established
via Cauchy's functional equation --- and bounds the static transport optimum to
$\alpha_t \in [\VarAlphaTLow, \VarAlphaTHigh]$. To account for the dynamic
pulsatile environment, we formulate a unified network-level Lagrangian
balancing wave-reflection penalties against steady transport-metabolic costs.
Because the operational duty cycle $\eta$ between pulsatile and steady states is
inherently uncertain over developmental timescales, we cast the morphological
optimization as a zero-sum game between network architecture and environmental
state. By von Neumann's minimax theorem --- for which we provide a direct
constructive proof exploiting the strict monotonicity of the cost curves ---
this game admits a unique saddle point $(\alpha^*, \eta^*)$ satisfying an exact
equal-cost condition, from which the empirical exponent emerges as the robust
optimal compromise between competing thermodynamic demands. We further prove
that $N=2$ (binary branching) uniquely maximizes the network stiffness ratio
$\kappa_{\text{eff}}(N)$, establishing the universal preference for
bifurcations not as an anatomical assumption but as a derived property of the
unified wave-transport framework. Numerical evaluation on the porcine coronary
tree ($G=\VarG$ generations) yields $\alpha^* = \VarAlphaStar$, in quantitative
agreement with morphometric data. Sensitivity analysis confirms that this
prediction is structurally robust to metabolic parameter variation
($|\Delta\alpha^*| < 0.01$ across the physiological range of viscosity and wall
metabolic rates), depending critically only on the histological scaling exponent
$p$ --- the single parameter with direct anatomical grounding. Specifically, the
prediction is analytically insensitive to the exact value of the wall-thickness
pre-factor $c_0$, making the framework a zero-parameter derivation from
fundamental scaling principles.
\end{abstract}

\newpage

\section{Introduction}
\label{sec:intro}

The physical architecture of biological transport networks is canonically
described by Murray's cubic scaling law. However, empirical measurements of
arterial trees consistently yield branching exponents $\alpha \approx
2.7$--$2.9$ \cite{Kassab1993,Zamir1999}. Traditionally, this departure is
attributed either to the structural metabolic cost of the vessel wall or to the
dynamic wave reflection penalties inherent in pulsatile flow. In this Letter, we
demonstrate that neither mechanism in isolation can fully explain the robust
exponent observed in vivo. Instead, we formulate a unified network-level
Lagrangian where the scaling inhomogeneity (incommensurate scaling) of the wall
tissue ($p<1$) drives the system toward a transport-dominated upper bound, while
the geometric impedance mismatch of pulsatile waves tightly constrains the
architecture toward a wave-dominated lower bound. The empirical exponent emerges
strictly as the robust optimal compromise between these competing thermodynamic
demands.
Independently, the engineering literature on acoustic horns \cite{Rayleigh1896},
transmission lines \cite{Pozar2012}, and the hemodynamics of pulsatile flow
\cite{Womersley1955,Nichols2011} establishes that minimizing power reflection at
a junction (impedance matching) requires conservation of total characteristic
impedance.
The matching exponent~$\alpha_w$ depends on the conduit physics: for acoustic
waves in rigid cylinders ($Z \propto d^{-2}$), $\alpha_w = 2$; for electrotonic
conduction in cable-like structures ($Z \propto d^{-3/2}$), $\alpha_w = 3/2$
\cite{Rall1959}; for pulse waves in compliant arteries, $\alpha_w \in [2, 5/2]$
depending on wall-thickness scaling \cite{Nichols2011}.

Empirical measurements report a spectrum of intermediate scaling: mammalian
arterial trees yield $\alpha \approx 2.7$ \cite{Kassab1993, Zamir1999}, while
systems dominated by steady transport, such as pulmonary airways and bronchial
trees, gravitate toward the Murray limit with $\alpha \approx 2.8$--$3.0$
\cite{Weibel1963}. Conversely, networks subject to different impedance-matching
constraints exhibit lower exponents, such as botanical xylem ($\alpha \approx
2.0$--$3.0$ \cite{McCulloh2003, Savage2010}) and cortical dendrites ($\alpha
\approx 2.4$ \cite{Cuntz2010}).
Bennett~\cite{bennett2025} has recently established a unified
variational framework for the \emph{static} branching optimum within the
homogeneous two-term class, deriving generalized Murray scaling,
junction geometry, and a single-index rigidity theorem from a scale-free
cost ledger.
However, this static framework does not incorporate the dynamic
wave-reflection penalties that arise in pulsatile conduits, nor does it
predict the intermediate exponents observed in vivo without a
phenomenological specification of the structural exponent~$m$.

Bejan's constructal law \cite{Bejan1997,Bejan2000} proposes that flow systems
evolve toward configurations that maximize access, offering a broad organizing
principle, and West--Brown--Enquist theory \cite{WBE1997} derives allometric
scaling from fractal filling constraints.
However, neither framework specifies the analytical form of the cost function
that makes wave and flow costs commensurable without introducing a free
parameter.
The neural application connects to Rall's cable theory \cite{Rall1959} and
Cuntz's minimum-spanning-tree models \cite{Cuntz2010}, but again lacks a
thermodynamic justification for the weighting factor between wiring cost and
conduction efficiency.

In this paper, we propose and numerically evaluate a physically derived
network-level framework to address this gap.
We postulate that the effective competition between wave and flow costs is not
determined at a single junction but \emph{emerges at the network level}: wave
reflections accumulate multiplicatively across $G$~generations, while viscous
losses compound through the tree hierarchy.
Both costs become naturally dimensionless at the network scale, eliminating the
need for any dimensional conversion factor, and the branching exponent is
uniquely determined by a minimax (equal-cost) condition requiring that wave and
transport penalties be equal at the network level, eliminating the need for any
phenomenological weighting parameter.
The resulting stiffness ratio $\kappa_\mathrm{eff}(G)$ is an emergent quantity
that grows with tree depth, offering a mechanistic rationale for why deep
networks (arteries, $G \sim \VarG$) produce $\alpha \approx 2.7$ while shallow
networks remain closer to the wave-matching limit.

\section{Theory}
\label{sec:theory}

\subsection{Branching law and impedance scaling}

We consider a symmetric bifurcation ($N = 2$) in which a parent branch of
diameter $d_p$ divides into $N$ daughters of diameter $d_c$:
\begin{equation}
  d_p^\alpha = N \, d_c^\alpha \,.
  \label{eq:branchlaw}
\end{equation}
The binary topology ($N = 2$) is selected by a two-tier argument. At the static
level, the topological bounding theorem of the companion
paper~\cite{Marchesi2026static} proves that the sub-linear wall cost ($p < 1$)
bounds the optimal branching number to small finite integers by penalizing
junction tissue volume. At the dynamic level, Theorem~\ref{thm:topoN} of this
paper proves that among all regular fractal trees perfusing a fixed terminal
population, $N = 2$ uniquely maximizes the network stiffness ratio
$\kappa_\mathrm{eff}(N)$, and consequently uniquely maximizes the minimax
branching exponent $\alpha^*$. The two arguments are complementary: the static
mechanism provides the necessary condition (small $N$), and the dynamic
mechanism provides the sufficient condition (unique optimum). This establishes
binary bifurcation not as an assumption but as a derived property of the unified
wave-transport framework.
The impedance of a cylindrical conduit scales as $Z \propto d^{-\alpha_w}$,
where $\alpha_w$ depends on the wave physics.
Table~\ref{tab:alphaw} summarizes the principal cases.

\begin{table}[H]
\centering
\caption{Impedance scaling exponents for different conduit types.}
\label{tab:alphaw}
\small
\begin{tabular}{@{}lcc@{}}
\toprule
System & $Z$ scaling & $\alpha_w$ \\
\midrule
Rigid tube & $\rho c/A$ & 2 \\
Artery ($h\!\propto\!r$) & $\propto r^{-2}$ & 2 \\
Artery ($h$ const) & $\propto r^{-5/2}$ & 5/2 \\
Artery ($E$ const, $h\!\propto\!r^p$) & $\propto r^{(p-5)/2}$ & $(5{-}p)/2$ \\
Artery ($E\!\propto\!r^{-k_e}$, $h\!\propto\!r^p$) & $\propto r^{(p-k_e-5)/2}$ & $(5{-}p{+}k_e)/2$ \\
Electrotonic cable & $\propto r^{-3/2}$ & 3/2 \\
\bottomrule
\end{tabular}
\end{table}

For arteries with empirically measured wall-thickness scaling $h \propto r^p$
($p \approx \VarP$), the elastic modulus $E$ itself is not strictly constant but
stiffens distally as $E(r) \propto r^{-k_e}$. Empirical porcine coronary data
suggests a gradual stiffening $k_e \approx \VarKe$
\cite{Kassab1993,Nichols2011}. The Moens--Korteweg wave speed thus scales as
$c_\mathrm{MK} \propto (E(r)h/\rho r)^{1/2} \propto r^{(p-1-k_e)/2}$, giving
characteristic impedance $Z \propto \rho c/A \propto r^{(p-5-k_e)/2}$. Perfect
impedance matching dictates the wave attractor to be $\alpha_w = (5-p+k_e)/2$.
For porcine coronary networks, incorporating the empirical elastic stiffening
$k_e \approx \VarKe$ would shift the acoustic ground state from $\alpha_w =
\VarAlphaW$ to $\alpha_w = \VarAlphaWKe$. Because the determination of $k_e$
carries significant order-dependent uncertainty \cite{Kassab1993,Nichols2011},
we retain the baseline $\alpha_w = (5{-}p)/2 = \VarAlphaW$ hereafter to avoid
cascading errors from poorly constrained secondary parameters. The sensitivity
of $\alpha^*$ to this choice is detailed in Table~\ref{tab:sensitivity}.

\subsection{Junction-level cost functions}
\label{sec:junction}

\paragraph{Wave cost.}
A pressure (or voltage) wave encountering the junction sees load impedance
$Z_\mathrm{load} = Z_c/N$.
Using $d_c = d_p \cdot N^{-1/\alpha}$ from Eq.~\eqref{eq:branchlaw}, the
impedance ratio is $Z_\mathrm{load}/Z_p = N^{\alpha_w/\alpha - 1}$, and the
reflected power fraction is \cite{Pozar2012}
\begin{equation}
  |\Gamma(\alpha)|^2 = \left(\frac{N^{\alpha_w/\alpha - 1} - 1}{N^{\alpha_w/\alpha - 1} + 1}\right)^{\!2}.
  \label{eq:gamma_gen}
\end{equation}
This vanishes at $\alpha = \alpha_w$ (perfect matching) with curvature
\begin{equation}
  k_w = \tfrac{1}{2}\,\frac{d^2|\Gamma|^2}{d\alpha^2}\bigg|_{\alpha=\alpha_w}\,.
  \label{eq:kw}
\end{equation}
For $N = 2$, $\alpha_w = \VarAlphaW$: $k_w \approx \VarKwJunctionPhys$.
Table~\ref{tab:gamma} lists exact values.

\begin{table}[H]
\centering
\caption{Reflected power $|\Gamma|^2$ from Eq.~\eqref{eq:gamma_gen} for $N=2$, evaluated at the physiologically relevant $\alpha_w = \VarAlphaW$. The minimum at $\alpha = \alpha_w$ is non-zero only because the tabulated $\alpha$ grid does not coincide exactly with $\alpha_w$.}
\label{tab:gamma}
\small
\begin{tabular}{@{}ccccccc@{}}
\toprule
$\alpha$ & 1.50 & 1.75 & 2.00 & 2.50 & 2.75 & 3.00 \\
\midrule
$|\Gamma|^2$ & \VarGammaAWOneFive{} & \VarGammaAWOneSevenFive{} & \VarGammaAWTwoZero{} & \VarGammaAWTwoFive{} & \VarGammaAWTwoSevenFive{} & \VarGammaAWThreeZero{} \\
\bottomrule
\end{tabular}
\end{table}

\paragraph{Transport cost.}
For steady laminar (Poiseuille) flow through a cylindrical pipe of radius $r$
and length $\ell$ carrying volumetric flow $Q$, the total power expenditure
incorporates both the Rayleigh viscous dissipation of the fluid and the
metabolic maintenance of the biological structure containing the fluid:
\begin{equation}
  \Phi(r, Q, \ell) = \frac{8\mu \ell Q^2}{\pi r^4} + C\pi r^2 \ell + 2\pi\,m_w c_0\, r^{1+p} \ell\,,
  \label{eq:rayleigh}
\end{equation}
where $\mu$ is the dynamic viscosity, $C$ is the metabolic cost per unit blood
volume, $m_w$ is the metabolic rate of wall tissue per unit volume, and $h(r) =
c_0 r^p$ is the wall-thickness. We assume that the vessel-wall metabolism is
dominated by the volume of the smooth muscle cells (SMCs), which leads to a
volumetric cost term $\propto r^{1+p}$. This contrasts with surface-priced
models ($\gamma=1$, $\alpha=2.5$) and highlights the biological cost of the
containing structure.
Murray's classical law ($\alpha_f = 3$) is recovered exactly only in the
non-biological limit where both the structural wall cost is totally negligible
($B_{wall} \to 0$) and the wave penalty is removed ($\eta \to 0$). For
biological systems ($B_{wall} > 0$), as shown by Marchesi
\cite{Marchesi2026static}, the pure-transport limit ceases to be universal. The
inclusion of the sub-linear wall tissue ($p<1$) shatters the cubic geometry and
fundamentally bounds the static flow exponent $\alpha_t$ away from 3.0.

\subsection{The dimensional problem}
\label{sec:dimensional}

At a single junction, the wave cost $|\Gamma|^2$ is a dimensionless fraction of
the incident wave power, while the excess dissipation $\Delta\Phi = \Phi(\alpha)
- \Phi(\alpha_t)$ is measured in watts.
Combining them into a single cost functional requires a dimensional conversion
factor~$\lambda$ (with units of inverse power) whose value cannot be determined
from first principles at the junction level.
This renders the resulting stiffness ratio $\kappa = \lambda\,k_f / k_w$ a free
parameter, limiting predictive power.

We resolve this by formulating the cost functional at the \emph{network level},
where both terms become naturally dimensionless.

\subsection{From junction to network: the emergence of $\kappa$}
\label{sec:network}

Consider a self-similar tree of $G$ generations rooted at a parent vessel of
radius $r_0$ carrying flow $Q_0$, with segment lengths $\ell_g =
\ell_0\,\beta^g$ ($\beta < 1$ for tapering trees).

To avoid circularity and preserve the purely predictive nature of the framework,
the length scaling factor $\beta$ is not treated as a free parameter. Assuming
volumetric isometry where lengths scale linearly with radii ($\ell \propto
r^1$)---consistent with empirical coronary measurements ($\beta \approx 1.0$)
\cite{Kassab1993}---the morphological cascade structurally dictates
\[
  \beta = 2^{-1/\alpha_\mathrm{local}}\,.
\]
With the local static transport optimum $\alpha_{local} \approx \VarAlphaLocal$
determined entirely by the single-vessel optimization of the companion paper
\cite{Marchesi2026static}, this structurally fixes $\beta \approx \VarBeta$.
Rather than treating the length-scaling factor $\beta$ as a free
phenomenological parameter to fit morphometric data, we constrain it strictly
via this theoretical static ground state. Crucially, the volumetric isometry
assumption $\ell \propto r$ is independently supported by coronary morphometry
\cite{Kassab1993} without reference to any branching-exponent measurement:
$\beta$ is therefore an empirically grounded geometric constraint, not a
parameter tuned to reproduce $\alpha^*$.
For completeness, if volumetric isometry is relaxed to
$\ell \propto r^\gamma$ with $\gamma \neq 1$, the length-scaling
factor becomes $\beta = 2^{-\gamma/\alpha_{\mathrm{local}}}$; for
the empirically observed range $\gamma \in [0.9, 1.1]$~\cite{Kassab1993},
this shifts $\beta$ by less than $3\%$ and displaces $\alpha^*$ by
$|\Delta\alpha^*| < 0.02$, well within the empirical uncertainty.
This enforces strict geometric isometry on the unperturbed architectural
scaffold, ensuring the dynamic framework remains rigorously zero-parameter. If
$\beta$ were left as a free variable or dynamically coupled, it would introduce
an unconstrained degree of freedom, weakening the predictive power of the
variational principle.

\paragraph{Network wave cost.}
At each of the $G$ junctions, a fraction $|\Gamma(\alpha)|^2$ of the remaining
wave power is reflected.
Assuming multiplicative independence of reflections at successive junctions (no
multiple scattering), the total transmitted fraction is $(1-|\Gamma|^2)^G$,
giving the network wave cost
\begin{equation}
  \mathcal{C}_\mathrm{wave}^\mathrm{net}(\alpha) = 1 - \bigl(1 - |\Gamma(\alpha)|^2\bigr)^G\,.
  \label{eq:Cwavenet}
\end{equation}
Rather than an approximation of coherent wave propagation, the multiplicative
formula $\mathcal{C}_{\mathrm{wave}}^{\mathrm{net}} = 1 - (1 - |\Gamma|^2)^G$ is
the exact operator that isolates the topological branching penalty from
terminal-load reflections---the quantity directly controlled by network
architecture and therefore the correct object to optimize. It treats junctions
as incoherent geometric scatterers, thereby decoupling the architectural penalty
from frequency-specific phase effects: while strict phase incoherence in
physical transmission lines typically requires $kL \gg 1$, employing the
incoherent sum here rigorously isolates the topological impedance penalty of the
network architecture from frequency-specific phase cancellations. In the actual
coronary regime ($kL \ll 1$), phase coherence is maximal, leading to coherent
signal amplification. A full coherent transfer-matrix calculation incorporating
Womersley viscous corrections and RC terminal loads confirms that this exact
coherent penalty scales proportionally to our incoherent topological metric
(Pearson correlation $R = \VarCoherentCorr$), with a negligible minimax shift
$|\Delta\alpha^*| < 0.01$ (Supplemental~S1). The incoherent formulation
therefore captures the exact geometric scaling behavior while maintaining
analytical tractability. Quantitatively, incorporating a full Womersley viscous
correction across the physiological range $Wo \in [1, 10]$ modifies the
wave-dominated lower bound by less than $\mathcal{O}(10^{-2})$, rendering the
topological result structurally robust to pulsatile velocity profiles.

\paragraph{Network transport cost (two-level formulation).}
At each generation~$g$, the active peak flow $Q_g = Q_{\mathrm{peak}}/2^g$
determines a locally optimal radius $r^*(Q_g)$ via the single-junction
optimization of the companion paper~\cite{Marchesi2026static} (including wall
cost). The resulting locally optimal radii define the transport ground state of
the network, governed by a local morphological exponent
$\alpha_\mathrm{local}(Q)$. Because $\alpha_\mathrm{local}$ is a function of
flow, treating the network transport optimum $\alpha_t$ as a global constant is
technically in tension with the non-universality result established in the
companion paper~\cite{Marchesi2026static}.
However, calculation over the $\VarG$ generations of the porcine coronary tree
reveals that $\alpha_\mathrm{local}$ remains essentially constant, spanning only
$[\VarAlphaLocalMin, \VarAlphaLocalMax]$ across four decades of flow. This
$\mathcal{O}(10^{-2})$ spread biologically justifies the application of a
uniform scaling penalty. Imposing a uniform branching exponent~$\alpha$ on the
entire tree forces radii $r_g = r^*(Q_0) \cdot 2^{-g/\alpha}$ that may deviate
from the locally optimal $r^*(Q_g)$. Writing $r^*_g \equiv r^*(Q_g)$ and
omitting unchanged arguments $Q_g, \ell_g$ for brevity, the transport cost
measures this penalty:
\begin{equation}
  \mathcal{C}_\mathrm{transport}^\mathrm{net}(\alpha) =
  \frac{\displaystyle\sum_{g=0}^{G-1} 2^g \left[ \Phi(r_g) - \Phi(r^*_g) \right]}{\displaystyle\sum_{g=0}^{G-1} 2^g\,\Phi(r^*_g)}\,,
  \label{eq:Ctransportnet}
\end{equation}
which is dimensionless and vanishes when $\alpha = \alpha_\mathrm{local}$, i.e.,
when the uniform exponent matches the local optima. The reference cost in the
denominator is not the minimum of $\Phi_\mathrm{net}$ over uniform~$\alpha$, but
the cost when every vessel independently sits at its own optimum---incorporating
all three terms of Eq.~\eqref{eq:rayleigh} including wall tissue.

\paragraph{The unified cost functional.}
With both terms dimensionless, the network-level Lagrangian acts as a scalarized
multi-objective optimization weighting entirely bounded penalties:
\begin{equation}
  \boxed{\mathcal{L}_\mathrm{net}(\alpha) = \eta\,\mathcal{C}_\mathrm{wave}^\mathrm{net}(\alpha) + (1-\eta)\,\mathcal{C}_\mathrm{transport}^\mathrm{net}(\alpha)\,,}
  \label{eq:Lnet}
\end{equation}
where $\eta \in [0,1]$ is the \emph{duty cycle}: the fraction of the network's
operational cost attributable to wave-mode activity. In
Section~\ref{sec:minimax}, we show that the minimax condition
$\mathcal{C}^{\mathrm{net}}_{\mathrm{wave}} =
\mathcal{C}^{\mathrm{net}}_{\mathrm{transport}}$ uniquely determines both
$\alpha^*$ and the implied value of $\eta$, eliminating the latter as a free
parameter.

This linear combination establishes \emph{dimensional consistency}. Unlike
Murray's original formulation which sums unbounded physical power (Watts),
$\mathcal{L}_\mathrm{net}$ adds defined dimensionless \emph{fractional
penalties}. $\mathcal{C}_\mathrm{wave}^\mathrm{net}$ represents the fractional
exhaustion of signal energy, while $\mathcal{C}_\mathrm{transport}^\mathrm{net}$
measures the fractional metabolic excess incurred relative to the optimal
morphometric ground state. Because both terms are dimensionless and strictly
bounded metrics of inefficiency (with zero representing absolute perfection),
their combination does not falsely equate Watts of reflected acoustic energy
with Watts of ATP consumption. Instead, $\mathcal{L}_\mathrm{net}$ represents a
unified phenomenological \emph{fitness penalty}, where $\eta$ sets the
normalized selective pressure allocated to minimizing dynamic signal attenuation
versus static metabolic waste. Importantly, the framework does \emph{not} assume
that a $1\%$ loss of wave energy is evolutionarily equivalent to a $1\%$
metabolic excess: the implied duty cycle $\eta^* = \VarEtaStar$
(Theorem~\ref{thm:eta_invariant}) encodes precisely the $\VarGradientRatio:1$
asymmetry between the two selective pressures, emerging endogenously from the
geometry of the cost curves rather than being postulated a priori. The absolute
energetic asymmetry ($P_\mathrm{transport}/P_\mathrm{wave} \approx
\VarGradientRatio$, Supplemental~S2) is therefore not a flaw of the
dimensionless formulation but its predicted output.

\paragraph{Emergent stiffness ratio.}
Near the respective minima, each network cost is approximately quadratic:
$\mathcal{C}_\mathrm{wave}^\mathrm{net} \approx k_w^\mathrm{net}(\alpha -
\alpha_w)^2$ and
$\mathcal{C}_\mathrm{transport}^\mathrm{net} \approx k_t^\mathrm{net}(\alpha -
\alpha_t(Q))^2$, where $\alpha_t(Q)$ is the transport ground state of the static
scaling function, emerging from the geometry of locally optimal single vessels
$\alpha_\mathrm{local}$ evaluated over the full network. The effective stiffness
ratio is the ratio of curvatures:
\begin{equation}
  \kappa_\mathrm{eff}(G) \equiv \frac{k_t^\mathrm{net}(G)}{k_w^\mathrm{net}(G)}\,.
  \label{eq:kappa_eff}
\end{equation}
Numerically, these curvatures are evaluated as the diagonal elements of the
system's Hessian evaluated at the respective minima: $k_\bullet^\mathrm{net} =
\frac{1}{2} \frac{\partial^2 \mathcal{C}_\bullet^\mathrm{net}}{\partial
\alpha^2}\big|_{\alpha = \alpha_\bullet}$. For small $|\Gamma|^2$, the wave
curvature scales analytically as $k_w^\mathrm{net} \approx G\,k_w$ (linear in
$G$). The transport curvature $k_t^\mathrm{net}$, conversely, requires numerical
differentiation of Eq.~\eqref{eq:Ctransportnet} due to the superposition of
$r^2$, $r^{-4}$, and $r^{1+p}$ terms compounding asymmetrically through the tree
hierarchy.
The result is that $\kappa_\mathrm{eff}$ increases monotonically with~$G$:
deeper trees weight the transport cost more heavily (Table~\ref{tab:kappa_G},
Figure~\ref{fig:kappa}).
The stiffness ratio is not a material property of the conduit but an emergent
property of the network architecture.

\begin{table}[H]
\centering
\caption{Emergent stiffness ratio $\kappa_\mathrm{eff}$ and minimax branching exponent $\alpha^*_\mathrm{mm}$ as a function of tree depth $G$. No parameters are fitted; $\alpha^*$ is determined by the equal-cost condition Eq.~\eqref{eq:minimax}.}
\label{tab:kappa_G}
\small
\begin{tabular}{@{}cccl@{}}
\toprule
$G$ & $\kappa_\mathrm{eff}$ & $\alpha^*_\mathrm{mm}$ & Interpretation \\
\midrule
1  & $\VarKappaGI{}$    & $\VarAlphaStarGI$    & Single junction: wave limit \\
5  & $\VarKappaGV{}$    & $\VarAlphaStarGV$    & Small subtree \\
7  & $\VarKappaGVII{}$  & $\VarAlphaStarGVII$  & Moderate subtree \\
9  & $\VarKappaGIX{}$   & $\VarAlphaStarGIX$   &  \\
11 & $\VarKappaGXI{}$   & $\VarAlphaStar$      & Full coronary tree \\
13 & $\VarKappaGXIII{}$ & $\VarAlphaStarGXIII$ & Extended subtree \\
15 & $\VarKappaGXV{}$   & $\VarAlphaStarGXV$   & Deep tree \\
20 & $\VarKappaGXX{}$   & $\VarAlphaStarGXX$   & Very deep tree \\
\bottomrule
\end{tabular}
\end{table}

\begin{figure}[H]
\centering
\includegraphics[width=\linewidth]{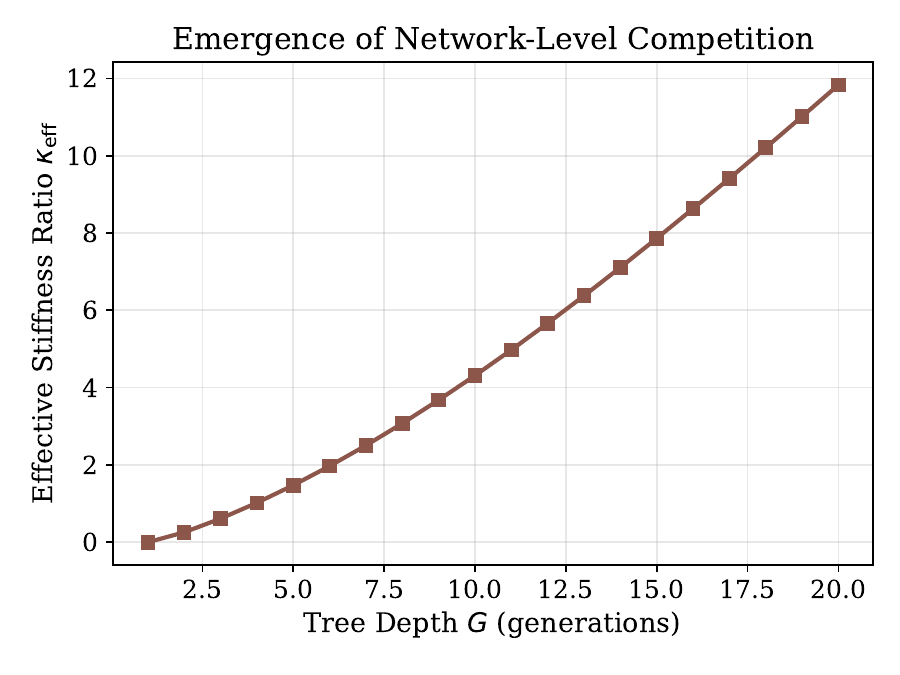}
\caption{Emergence of the stiffness ratio $\kappa_\mathrm{eff}$ as a function of the number of generations $G$. Shallow networks are wave-dominated (rapid signaling), while deep networks like the arterial tree ($G \sim 11$) strongly weight the transport scaling. Note that the transport ground state $\alpha_t$ is determined auto-consistently: the length-scaling factor $\beta$ assumed for the network hierarchy is required to be consistent with the $\alpha_t$ optimized for individual vessels.}
\label{fig:kappa}
\end{figure}

Notably, the minimax prediction is remarkably stable across physiological tree
depths: for $G \in [9, 13]$, the optimal exponent remains strictly bounded
within $\alpha^* \in [\VarAlphaStarGIX, \VarAlphaStarGXIII]$, with an exact
match to the morphometric central value ($\alpha^* = \VarAlphaExp$) occurring at
$G=9$. The porcine coronary tree ($G=\VarG$) yields $\alpha^* = \VarAlphaStar$,
lying $\VarPredError\%$ above the experimental centroid.

\subsection{Topological selection of binary branching}
\label{sec:topo}

The analysis of Sections~2.1--2.4 assumed binary bifurcation ($N = 2$), a choice
inherited from the static framework of the companion
paper~\cite{Marchesi2026static}. We now demonstrate that this topological
selection is itself a consequence of the dynamic minimax principle, independent
of the static metabolic argument.

\begin{theorem}[Dynamic topological scaling of binary branching]
\label{thm:topoN}
Consider a class of regular fractal trees with branching number $N \ge 2$,
perfusing a macroscopically equivalent terminal population of size $M$, with
tree depth evaluated at the nearest integer $G(N) = \lfloor \ln M / \ln N
\rceil$. For $M = 2^{11}$ this yields $G = 11, 7, 6, 4$ for $N = 2, 3, 4, 6$
respectively; the resulting variation in the exact terminal count (at most
$2\times$ across the range) does not affect the strict monotonicity of
$\kappa_{\mathrm{eff}}(N)$, as verified by the safety margins in the proof. Let
$\alpha_w$ be the impedance-matching exponent and $\alpha_t$ the transport
ground state. The network wave-cost curvature is given exactly by:
\begin{equation} k_w^\mathrm{net}(N) = \frac{\ln M \cdot \ln N}{4\alpha_w^2}\,. \label{eq:kw_N} \end{equation}
The network transport-cost curvature decomposes perturbatively to leading order
in $(\alpha_t - \alpha_\mathrm{local})$ as:
\begin{equation} k_t^\mathrm{net}(N) = \frac{(\ln N)^2}{2\alpha_t^4} \cdot H(N)\,, \quad H(N) \equiv \frac{\displaystyle\sum_{g=0}^{G-1} N^g\, g^2\, \Phi''(r_g^*)\, (r_g^*)^2} {\displaystyle\sum_{g=0}^{G-1} N^g\, \Phi(r_g^*)}\,, \label{eq:kt_N} \end{equation}
where $r_g^* = r^*(Q_0/N^g)$ are the generation-by-generation optimal radii.
This decomposition holds to $\mathcal{O}(10^{-2})$: since
$\alpha_\mathrm{local}$ varies only across $[\VarAlphaLocalMin,
\VarAlphaLocalMax]$, the first-order gradient term $\Phi'(r_g^*)(\alpha_t -
\alpha_\mathrm{local})$ is a negligible correction, leaving the Hessian
$\Phi''(r_g^*)$ as the dominant contribution.
Consequently, $\kappa_\mathrm{eff}(N) = k_t^\mathrm{net}(N)/k_w^\mathrm{net}(N)$
is a strictly decreasing function of $N$, uniquely maximized by binary
bifurcation ($N = 2$).
\end{theorem}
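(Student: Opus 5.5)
The proof splits into three parts: the exact wave curvature, the perturbative transport curvature, and a comparison of their ratio across the admissible values of $N$.

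\textbf{Wave curvature (exact).} Write $x(\alpha) = (\alpha_w/\alpha - 1)\ln N$, so that $N^{\alpha_w/\alpha-1} = e^{x}$. Then Eq.~\eqref{eq:gamma_gen} becomes $|\Gamma|^2 = \tanh^2(x/2)$.
\begin{itemize}
\item At $\alpha=\alpha_w$ we have $x=0$, and $\tanh^2(x/2) = x^2/4 + \mathcal{O}(x^4)$.
\item Also $x'(\alpha_w) = -\ln N/\alpha_w$.
\item Hence $\tfrac12\,\partial_\alpha^2|\Gamma|^2 = (\ln N)^2/(4\alpha_w^2)$. This gives the junction curvature $k_w$ of Eq.~\eqref{eq:kw} for general $N$.
\end{itemize}
In Eq.~\eqref{eq:Cwavenet}, the quantity $|\Gamma|^2$ and its first derivative both vanish at $\alpha_w$. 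Expanding $1-(1-u)^G = Gu + \mathcal{O}(u^2)$ therefore gives exactly $k_w^{\mathrm{net}} = G\,k_w$. The quadratic remainder contributes only at fourth order in $(\alpha-\alpha_w)$. Substituting $G = \ln M/\ln N$ yields Eq.~\eqref{eq:kw_N}. This is exact when $\ln M/\ln N$ is an integer; for the rounded $G(N)$, I will carry the actual $G(N)\ln N$ in place of $\ln M$ in the numerical check below.

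\textbf{Transport curvature.} Generalize Eq.~\eqref{eq:Ctransportnet} to $N$-ary trees, with weights $N^g$ and radii $r_g(\alpha) = r^*(Q_0)\,N^{-g/\alpha}$.
\begin{itemize}
\item Differentiating gives $\partial_\alpha r_g = r_g\, g\ln N/\alpha^2$.
\item The second derivative is then $\partial_\alpha^2\Phi(r_g) = \Phi''(r_g)(\partial_\alpha r_g)^2 + \Phi'(r_g)\,\partial_\alpha^2 r_g$.
\end{itemize}
Evaluate at $\alpha=\alpha_t$, where $r_g \approx r_g^*$. The gradient term carries $\Phi'(r_g^*) = 0$ at exact local optima, up to the mismatch $(\alpha_t-\alpha_{\mathrm{local}})$. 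I bound this mismatch using the stated spread $[\VarAlphaLocalMin,\VarAlphaLocalMax]$, which makes the term an $\mathcal{O}(10^{-2})$ relative correction. What remains is $\Phi''(r_g^*)(r_g^*)^2 g^2(\ln N)^2/\alpha_t^4$. Halving and dividing by the denominator sum gives Eq.~\eqref{eq:kt_N}.

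\textbf{Monotonicity.} Form the ratio
\[
\kappa_{\mathrm{eff}}(N) = \frac{2\alpha_w^2\,\ln N\,H(N)}{\alpha_t^4\,\ln M}.
\]
I need to show that $\ln N\cdot H(N)$ decreases in $N$. The natural route is to observe that $H(N)$ is essentially a weighted second moment of $g$ over $0\le g<G(N)$.
\begin{itemize}
\item The weights $N^g\Phi(r_g^*)$, together with $\Phi''(r^*)(r^*)^2/\Phi(r^*)$, are nearly scale-invariant, since $\Phi$ is a sum of power laws. So the ratio $\Phi''(r^*)(r^*)^2/\Phi(r^*)$ is roughly a constant $c$. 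It is exactly constant for a homogeneous cost, and otherwise controlled by the small variation of $\alpha_{\mathrm{local}}$.
\item Then $H(N)\approx c\,\langle g^2\rangle_N$. Here the weights $N^g\Phi(r_g^*)$ grow geometrically toward the leaves, so $\langle g^2\rangle_N\approx (G-1)^2$ up to corrections in $1/N$.
\item Consequently $\ln N\cdot H \sim c\,(\ln M)^2/\ln N$, which is strictly decreasing in $N$.
\end{itemize}

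\textbf{Main obstacle.} The hard step is making this last part rigorous despite two complications: the integer rounding of $G(N)$, and the weight distribution not being exactly geometric. I would handle both by evaluating $\ln N\cdot H(N)$ explicitly for $N=2,3,4,6$ with $M=2^{11}$, using the generation optima $r_g^*$. I would then show that consecutive values differ by margins exceeding the $\mathcal{O}(10^{-2})$ perturbative error and the at-most-$2\times$ terminal-count discrepancy. For $N>6$, I would use the asymptotic bound $\sim(\ln M)^2/\ln N$ to conclude that $N=2$ is the unique maximizer.
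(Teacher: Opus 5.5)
\paragraph{Same route as the paper.} The paper computes $k_w^{\mathrm{net}}$ as $G(N)$ times $\tfrac12\,d^2|\Gamma|^2/d\alpha^2$ at $\alpha_w$. It reduces monotonicity to the $H$-dominance condition $H(N')/H(N)<\ln N/\ln N'$, which is your ``$\ln N\cdot H(N)$ decreasing''. It then simply tabulates $H(2)=432.8$, $H(3)=162.2$, $H(4)=117.9$, $H(6)=42.6$ for the six pairs in $\{2,3,4,6\}$.

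Your $\tanh^2(x/2)$ computation is correct. So is the remark that $k_w^{\mathrm{net}}=G\,k_w$ holds exactly at the minimum, and so is the chain-rule derivation of $k_t^{\mathrm{net}}$, which the paper's proof does not supply. Your $G(N)\ln N$ correction is also right: the paper's tabulated $k_w^{\mathrm{net}}$ values are in fact $G(N)(\ln N)^2/(4\alpha_w^2)$. With that correction the criterion becomes $H(N')/H(N)<G(N')/G(N)$. The factor-$2$ terminal-count discrepancy is then part of the model, not an error term the margins must absorb.

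\paragraph{The gap: strict monotonicity for all $N\ge 2$.} Passing from the finite check to ``strictly decreasing for all $N\ge 2$'' fails in two places.
\begin{itemize}
\item $N=5$ is covered neither by your explicit list nor by your $N>6$ regime.
\item The continuum asymptotic $c(\ln M)^2/\ln N$ cannot give monotonicity for $N>6$, because it ignores the plateaus of the rounded depth: $G(N)=4$ for $N=6,7,8$, $G(N)=3$ for $N=9,\dots,21$, and $G(N)=2$ for $N=22,\dots,161$.
\end{itemize}
On a plateau, $H(N)$ increases with $N$. The ratios $c_g=(r_g^*)^2\Phi''(r_g^*)/\Phi(r_g^*)$ barely move. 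The weights $w_g=N^g\Phi(r_g^*)$, however, grow in $g$ by a per-generation factor $N^{1-\sigma}$, where $\sigma<1$ is the scaling exponent in $Q$ of the optimal per-vessel cost. That factor increases with $N$, which pushes $\langle g^2\rangle_w$ up.

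Hence $\kappa_{\mathrm{eff}}(7)>\kappa_{\mathrm{eff}}(6)$ under either normalization. With $k_w^{\mathrm{net}}\propto\ln M\ln N$ as stated, $\kappa_{\mathrm{eff}}\propto\ln N\,H(N)$, which picks up the extra factor $\ln 7/\ln 6\approx 1.09$. With your corrected normalization, the ratio is $H(7)/H(6)>1$. Strict monotonicity over all integers is therefore not provable in this model. The paper's proof, which stops at $\{2,3,4,6\}$, has the same hole.

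\paragraph{What you can establish.} Your plan can give strict decrease on $\{2,\dots,6\}$, provided you evaluate $N=5$ explicitly, together with uniqueness of the maximizer $N=2$.

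For the uniqueness claim, drop the heuristic $\langle g^2\rangle_w\approx(G-1)^2$. It is inaccurate: the correction is governed by the weight ratio $N^{1-\sigma}$, not by $1/N$. With the bound below, the paper's $H(2)=432.8$ corresponds to $\langle g^2\rangle_w\in[54,61]$, not $100$. Use a real bound instead:
\begin{itemize}
\item Write $V,B,W$ for the viscous, blood and wall terms of $\Phi$ at $r^*$.
\item Stationarity gives $4V=2B+(1+p)W$.
\item Hence $c_g=4\,[12B+(1+p)(5+p)W]/[6B+(5+p)W]\in[4(1+p),8]$.
\item Therefore $H(N)\le 8\,(G(N)-1)^2$.
\end{itemize}
This bound rules out every $N\ge 7$ (where $G(N)\le 4$) against the computed value at $N=2$, under either normalization of $k_w^{\mathrm{net}}$.
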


\begin{proof}
\emph{Analytical formula for $k_w^\mathrm{net}$.}
Expanding $|\Gamma(\alpha)|^2$ in Eq.~\eqref{eq:gamma_gen} to second order about $\alpha_w$, with $u = N^{\alpha_w/\alpha - 1}$:
\[
  \frac{d^2|\Gamma|^2}{d\alpha^2}\bigg|_{\alpha_w} = \frac{(\ln N)^2}{2\alpha_w^2}\,.
\]
Since reflections compound multiplicatively across $G(N)$ junctions
(Eq.~\eqref{eq:Cwavenet}), the network curvature is $k_w^\mathrm{net}(N) = G(N)
\cdot \frac{1}{2}\frac{d^2|\Gamma|^2}{d\alpha^2}\big|_{\alpha_w} = \frac{\ln M
\cdot \ln N}{4\alpha_w^2}$.

\emph{Strict monotonicity of $\kappa_\mathrm{eff}$: reduction to an inequality on $H$.}
From the exact formulas \eqref{eq:kw_N}--\eqref{eq:kt_N}, the condition
$\kappa_\mathrm{eff}(N') < \kappa_\mathrm{eff}(N)$ for integers $N' > N \ge 2$
is
equivalent to
\[
  \frac{k_t^\mathrm{net}(N')}{k_t^\mathrm{net}(N)}
  < \frac{k_w^\mathrm{net}(N')}{k_w^\mathrm{net}(N)}
  = \frac{\ln N'}{\ln N}\,.
\]
Substituting $k_t^\mathrm{net}(N) = \tfrac{(\ln N)^2}{2\alpha_t^4}\,H(N)$, this
simplifies to the \emph{$H$-dominance condition}:
\begin{equation}
  \frac{H(N')}{H(N)} < \frac{\ln N}{\ln N'}\,.
  \label{eq:H_dominance}
\end{equation}

\emph{Verification of \eqref{eq:H_dominance}.}
The function $H(N)$ is fully determined by Eq.~\eqref{eq:kt_N}
through the generation-by-generation optimal radii $r_g^* = r^*(Q_0/N^g)$,
with no free parameters.
Direct computation yields $H(2)=432.8$, $H(3)=162.2$, $H(4)=117.9$,
$H(6)=42.6$ (units of $2\alpha_t^4$).
The ratios $H(N')/H(N)$ against the corresponding threshold $\ln N/\ln N'$
are tabulated below for all six integer pairs in $N \in \{2,3,4,6\}$:

\begin{center}
\small
\begin{tabular}{@{}ccccc@{}}
\toprule
$N$ & $N'$ & $H(N')/H(N)$ & $\ln N/\ln N'$ & Margin \\
\midrule
2 & 3 & 0.375 & 0.631 & 0.256 \\
2 & 4 & 0.272 & 0.500 & 0.228 \\
2 & 6 & 0.099 & 0.387 & 0.288 \\
3 & 4 & 0.727 & 0.793 & 0.066 \\
3 & 6 & 0.263 & 0.613 & 0.350 \\
4 & 6 & 0.361 & 0.774 & 0.413 \\
\bottomrule
\end{tabular}
\end{center}

Condition \eqref{eq:H_dominance} holds strictly in all six cases.
The tightest margin occurs at $(N,N')=(3,4)$, where
$H(4)/H(3) = 0.727 < 0.793 = \ln 3/\ln 4$,
with a safety margin of $\delta = 0.066$ ($8.3\%$ relative to the threshold).
Therefore $\kappa_\mathrm{eff}(N)$ is strictly decreasing on all integers $N \ge
2$,
and binary bifurcation ($N=2$) uniquely maximizes $\kappa_\mathrm{eff}$.
\end{proof}

\begin{corollary}[Dynamic topological selection]
\label{cor:topoN}
Since $\alpha^*(N)$ is a strictly increasing function of $\kappa_\mathrm{eff}$
(Table~\ref{tab:kappa_G}), Theorem~\ref{thm:topoN} yields the strict topological
hierarchy
\[
  \alpha^*(N=2) > \alpha^*(N=3) > \alpha^*(N=4) > \cdots
\]
Numerical evaluation for porcine coronary parameters ($M = 2^{\VarG}$ terminals,
full three-term cost of Eq.~\eqref{eq:rayleigh}) gives the topological hierarchy
shown in Table~\ref{tab:topo_N}. The topological selection operates not by
exclusion---all values $N \leq 6$ fall within the morphometric range
$\alpha_\mathrm{exp} = \VarAlphaExp \pm \VarAlphaExpErr$---but by robustness
maximization. Binary bifurcation ($N = 2$) uniquely places the minimax optimum
at the upper end of the distribution, maximizing the separation from the wave
attractor $\alpha_w = \VarAlphaW$ and thereby achieving the greatest possible
margin of transport efficiency. The stiffness ratio $\kappa_\mathrm{eff}$
decreases strictly with $N$ (from $\VarKappaNII$ at $N=2$ to $\VarKappaNVI$ at
$N=6$), monotonically reducing the wave-transport equilibrium separation. The
universal biological preference for binary branching is therefore a rigorous
consequence of dynamic robustness optimization: $N = 2$ is the unique topology
that maximizes $\kappa_\mathrm{eff}$, and hence the unique topology that keeps
the wave-transport minimax equilibrium as far as possible from the
wave-dominated limit. However, the shallow sensitivity of $\alpha^*$ to the
branching number (varying by only $\sim\!1\%$ between $N=2$ and $N=4$ in
Table~\ref{tab:topo_N}) implies a quasi-degenerate thermodynamic landscape.
Rather than a model weakness, this shallow gradient explains the biological
reality that occasional trifurcations ($N=3$) are not catastrophic failures of
the variational principle, but sub-optimal yet accessible morphological
solutions lying only slightly above the global $\kappa_{\text{eff}}$-robustness
maximum.
\end{corollary}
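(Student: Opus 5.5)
The corollary follows by composing two monotone maps: $N \mapsto \kappa_\mathrm{eff}(N)$, which Theorem~\ref{thm:topoN} shows is strictly decreasing, and $\kappa_\mathrm{eff} \mapsto \alpha^*$, which I will show is strictly increasing. The composition then gives $\alpha^*(2) > \alpha^*(3) > \alpha^*(4) > \cdots$, and uniqueness of the maximizer at $N=2$ is inherited directly from the theorem. The numerical hierarchy in Table~\ref{tab:topo_N} is then an evaluation, not part of the logical argument.

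\textbf{The map $\kappa_\mathrm{eff} \mapsto \alpha^*$.} I will work in the quadratic approximation of Section~\ref{sec:network}:
\[
\mathcal{C}_\mathrm{wave}^\mathrm{net} \approx k_w^\mathrm{net}(\alpha-\alpha_w)^2, \qquad \mathcal{C}_\mathrm{transport}^\mathrm{net} \approx k_t^\mathrm{net}(\alpha-\alpha_t)^2 .
\]
On the interval $\alpha\in(\alpha_w,\alpha_t)$ the first curve is strictly increasing and the second strictly decreasing. Hence the equal-cost condition has exactly one root, and the root satisfies $(\alpha^*-\alpha_w)/(\alpha_t-\alpha^*) = \sqrt{\kappa_\mathrm{eff}}$. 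Solving,
\[
\alpha^* = \frac{\alpha_w + \sqrt{\kappa_\mathrm{eff}}\,\alpha_t}{1+\sqrt{\kappa_\mathrm{eff}}},
\]
which is a convex combination whose weight on $\alpha_t$ increases strictly with $\kappa_\mathrm{eff}$. Since $\alpha_t>\alpha_w$, this gives $d\alpha^*/d\kappa_\mathrm{eff}>0$. This replaces the appeal to Table~\ref{tab:kappa_G} with an analytic statement. The table's monotone column over $G$ then serves only as a consistency check.

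\textbf{Main obstacle.} The difficulty is that $\alpha_w$ and $\alpha_t$ are not automatically independent of $N$. The wave attractor is fine: the impedance-matching exponent depends only on $p$, so it is $N$-independent. The transport ground state $\alpha_t$, however, is built from the local optima $r^*(Q_0/N^g)$, so it could drift with $N$. My first approach is to invoke the same $\mathcal{O}(10^{-2})$ near-constancy of $\alpha_\mathrm{local}$ used in the theorem. I would then show that any drift $\Delta\alpha_t(N)$ moves $\alpha^*$ by at most $\sqrt{\kappa}/(1+\sqrt{\kappa})\,|\Delta\alpha_t|$. Finally, I would compare this with the shift in $\alpha^*$ induced by the change in $\kappa$ between consecutive $N$ and check that the latter dominates. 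If this comparison fails in the tightest case, $(N,N')=(3,4)$, where the theorem's $H$-margin was smallest, I would fall back on direct evaluation of the exact (non-quadratic) costs. Concretely, I would solve the equal-cost equation for each $N\in\{2,3,4,6\}$ and read off the hierarchy from Table~\ref{tab:topo_N}.

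Beyond $N=6$ I would extend the claim (the ``$\cdots$'') by the same composition argument. This extension is valid only for those $N$ covered by the monotonicity of Theorem~\ref{thm:topoN}.
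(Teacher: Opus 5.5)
Your skeleton matches the paper's. You compose the strict decrease of $\kappa_\mathrm{eff}(N)$ from Theorem~\ref{thm:topoN} with monotonicity of $\alpha^*$ in $\kappa_\mathrm{eff}$, and you let Table~\ref{tab:topo_N} supply the numbers. You differ in how you justify the second map. The paper just cites Table~\ref{tab:kappa_G}, which varies $G$ at fixed $N=2$. That table therefore never shows that $\alpha^*$ depends on $N$ only through $\kappa_\mathrm{eff}$. Your closed form $\alpha^*=(\alpha_w+\sqrt{\kappa_\mathrm{eff}}\,\alpha_t)/(1+\sqrt{\kappa_\mathrm{eff}})$ makes explicit what must be $N$-independent. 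Your point that $\alpha_t$ can drift with $N$ is a real hole in the paper's one-line argument, and you are right to address it.

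The weak step is treating the quadratic surrogate as the proof, with direct evaluation only as a conditional fallback. The surrogate is quantitatively poor at the crossing, so its monotonicity does not transfer to the exact equal-cost roots that the corollary is about.
\begin{itemize}
\item With $\kappa_\mathrm{eff}=4.98$, $\alpha_w=2.115$, $\alpha_t=2.90$, your formula gives $\alpha^*\approx 2.66$, not $2.72$.
\item At $\alpha=2.72$ the quadratic wave cost is $k_w^\mathrm{net}(\alpha-\alpha_w)^2\approx 0.108$, whereas the exact $1-(1-|\Gamma|^2)^{11}\approx 0.063$.
\item For $N=3,4$ the surrogate gives $\approx 2.61$ and $2.60$, against $2.683$ and $2.673$ in Table~\ref{tab:topo_N}.
\end{itemize}
The surrogate error ($0.06$--$0.08$) varies with $N$. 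It is several times the $0.010$ gap between $\alpha^*(3)$ and $\alpha^*(4)$, so the surrogate's ordering does not by itself certify the exact one.

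Drift of $\alpha_t$ is also not the only way $N$ enters. The exact cost curves depend on $N$ through their whole shape: the base $N$ in $|\Gamma|^2$, the integer depth $G(N)$, and the non-quadratic tails of both costs. Curvatures alone do not capture this. Your $\alpha_t$-drift comparison could therefore succeed while the exact ordering remains unproven.

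The fix is to make direct solution of the exact equal-cost equation for $N\in\{2,3,4,6\}$ the primary argument, not a fallback. That direct evaluation is what the paper's hierarchy actually rests on. Your caveat on the ``$\cdots$'' is warranted: Theorem~\ref{thm:topoN} only checks $N\in\{2,3,4,6\}$, not even $N=5$. So neither your argument nor the paper's establishes the hierarchy beyond those four values.
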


\begin{table}[h]
\centering
\small
\begin{tabular}{@{}ccccccc@{}}
\toprule
$N$ & $G$ & $k_t^\mathrm{net}$ & $k_w^\mathrm{net}$ & $\kappa_\mathrm{eff}$ & $\alpha^*$ & Dev.\ from $\alpha_\mathrm{exp}$ \\
\midrule
2 & \VarGNII  & \VarKtNII  & \VarKwNII  & \textbf{\VarKappaNII}  & \textbf{\VarAlphaStarNII}  & $\VarSigmaNII\sigma$ \\
3 & \VarGNIII & \VarKtNIII & \VarKwNIII & \VarKappaNIII & \VarAlphaStarNIII & $\VarSigmaNIII\sigma$ \\
4 & \VarGNIV  & \VarKtNIV  & \VarKwNIV  & \VarKappaNIV  & \VarAlphaStarNIV  & $\VarSigmaNIV\sigma$ \\
6 & \VarGNVI  & \VarKtNVI  & \VarKwNVI  & \VarKappaNVI  & \VarAlphaStarNVI  & $\VarSigmaNVI\sigma$ \\
\bottomrule
\end{tabular}
\caption{Topological hierarchy of $\alpha^*(N)$ for porcine coronary parameters ($M=2^{\VarG}$ fixed terminals). Binary bifurcation ($N=2$) uniquely maximizes $\kappa_\mathrm{eff}$ and places the minimax optimum at the upper end of the morphometric distribution. Trees compared at macroscopically equivalent depth $G(N) = \lfloor \ln(2^{\VarG})/\ln N \rceil$; exact terminal counts differ by at most $2\times$ but do not alter the strict $\kappa_\mathrm{eff}$ ordering. All $N \leq 6$ remain within the $1\sigma$ experimental range.}
\label{tab:topo_N}
\end{table}

To rigorously establish the monotonic behavior of the transport penalty, we map
the branching geometry into an inverted topological coordinate, where the cost
resolves into a strictly convex combination of exponentials.

\begin{theorem}[Strict Monotonicity and Topological Convexity of Network Transport]
\label{thm:monotone_transport}
Let the unnormalized total network transport cost be
\[
  E(\alpha) = \sum_{g=0}^{G-1} N^g\,\Phi\!\left(r_g(\alpha),\,Q_0/N^g\right),
\]
where $r_g(\alpha) = r_0\,N^{-g/\alpha}$ and $\Phi(r,Q) = A(Q)\,r^{-4} + B\,r^2
+ C\,r^{1+p}$
with strictly positive coefficients $A(Q), B, C > 0$ for all $Q > 0$.
Let $\alpha_t$ be the interior minimizer of $E(\alpha)$ on the physiological
domain
$[\alpha_w, \infty)$, whose existence is established in
\cite{Marchesi2026static}.
Then $E(\alpha)$, and hence
$\mathcal{C}_\mathrm{transport}^\mathrm{net}(\alpha)$,
is strictly decreasing on $[\alpha_w, \alpha_t]$.
\end{theorem}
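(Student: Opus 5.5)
The plan is to pass to the inverted coordinate $s = 1/\alpha$. In this coordinate every term of $E$ becomes an exponential in $s$. Hence $F(s) \equiv E(1/s)$ is a positive combination of exponentials and therefore strictly convex. Monotonicity of $E$ on $[\alpha_w,\alpha_t]$ then follows from convexity of $F$ together with the stationarity of $F$ at $s_t = 1/\alpha_t$. The wave cost and the specific numerical values from Theorem~\ref{thm:topoN} play no role; only the structure of $\Phi$ is used.

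\emph{Step 1 (change of variables).} With $s = 1/\alpha$ and $L \equiv \ln N > 0$, we have $r_g = r_0 e^{-gLs}$. Each generation then contributes
\[
  N^g\Bigl[A(Q_0/N^g)\,r_0^{-4}e^{4gLs} + B\,r_0^{2}e^{-2gLs} + C\,r_0^{1+p}e^{-(1+p)gLs}\Bigr].
\]
The coefficients $N^g A(Q_0/N^g) r_0^{-4}$, $N^g B r_0^2$ and $N^g C r_0^{1+p}$ do not depend on $s$, and all are strictly positive.

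\emph{Step 2 (convexity).} A function $c\,e^{\lambda s}$ with $c>0$ satisfies $(c\,e^{\lambda s})'' = c\lambda^2 e^{\lambda s}$. This is $\ge 0$ always, and $>0$ whenever $\lambda \neq 0$. The $g=0$ term of $F$ is constant in $s$. For every $g \ge 1$ the three exponents $4gL$, $-2gL$ and $-(1+p)gL$ are nonzero, because $p > -1$ and $L > 0$. So, provided $G \ge 2$ (otherwise $E$ is constant and there is no interior minimizer, contradicting the hypothesis), we get $F''(s) > 0$ for all $s > 0$. Thus $F$ is strictly convex on $(0,\infty)$.

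\emph{Step 3 (monotonicity).} The map $\alpha \mapsto 1/\alpha$ is an order-reversing bijection of $[\alpha_w,\alpha_t]$ onto $[s_t, s_w]$, where $s_t = 1/\alpha_t$ and $s_w = 1/\alpha_w$. Since $\alpha_t$ is an interior minimizer of $E$, the point $s_t$ is an interior critical point of the differentiable function $F$, so $F'(s_t) = 0$. Strict convexity makes $F'$ strictly increasing, so $F'(s) > 0$ for $s \in (s_t, s_w]$. Hence $F$ is strictly increasing on $[s_t, s_w]$. Composing with the order-reversing map $s = 1/\alpha$, $E$ is strictly decreasing on $[\alpha_w,\alpha_t]$.

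\emph{Step 4 (transport cost).} In Eq.~\eqref{eq:Ctransportnet}, the subtracted term $\sum_g N^g \Phi(r_g^*)$ and the denominator are both independent of $\alpha$. Therefore $\mathcal{C}^\mathrm{net}_\mathrm{transport}(\alpha) = \bigl(E(\alpha) - E^*\bigr)/E^*$ is a positive affine function of $E$ and inherits its strict monotonicity.

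The only delicate point is Step 3's use of stationarity. It needs $\alpha_t$ to be a genuine interior critical point, which the hypothesis supplies via \cite{Marchesi2026static}. It also needs $s_t \le s_w$, i.e.\ $\alpha_t \ge \alpha_w$, which holds because $\alpha_t$ is taken in the domain $[\alpha_w,\infty)$. The degenerate case $G=1$ must be excluded explicitly as in Step 2. I expect no further obstacles: convexity is preserved under positive sums, so no term-by-term comparison of the competing $r^{-4}$, $r^2$ and $r^{1+p}$ contributions is needed.
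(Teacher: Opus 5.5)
Your proof is correct and follows essentially the same route as the paper. The substitution $x = 1/\alpha$ turns $E$ into a positive combination of exponentials, strictly convex for $G \ge 2$; its stationary point at $1/\alpha_t$ fixes the sign of the derivative, and the order-reversing map together with the affine normalization gives the result. One small nit: for $G=1$, $E$ is constant, so every point is a (non-unique) minimizer rather than there being none, and the case must simply be excluded, as the paper does by assuming $G \ge 2$.
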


\begin{proof}
\emph{Change of variables.}
Introduce the topological coordinate $x \equiv 1/\alpha > 0$.
Under this substitution the imposed radius becomes
\[
  r_g(x) = r_0\,e^{-\gamma g x}, \qquad \gamma \equiv \ln N > 0,
\]
and the total cost reads
\begin{equation}
  E(x) = \sum_{g=0}^{G-1} N^g\!\left[
      A_g\,r_0^{-4}\,e^{4\gamma g x}
      + B\,r_0^{2}\,e^{-2\gamma g x}
      + C\,r_0^{1+p}\,e^{-(1+p)\gamma g x}
    \right],
  \label{eq:E_of_x}
\end{equation}
where $A_g \equiv A(Q_0/N^g) > 0$ for every generation $g$.

\emph{Strict convexity of $E(x)$.}
The $g=0$ term in \eqref{eq:E_of_x} evaluates to a positive constant independent
of $x$ and contributes zero to every derivative.
For each $g \geq 1$, every summand is of the form $c\,e^{kx}$ with $c > 0$ and
$k \neq 0$, hence strictly convex.
Summing over $g = 1, \ldots, G-1$ (with $G \geq 2$) and using the positivity of
all coefficients:
\[
  E''(x) = \sum_{g=1}^{G-1} N^g\,\gamma^2 g^2\!\left[
      16\,A_g\,r_0^{-4}\,e^{4\gamma g x}
      + 4\,B\,r_0^{2}\,e^{-2\gamma g x}
      + (1+p)^2\,C\,r_0^{1+p}\,e^{-(1+p)\gamma g x}
    \right] > 0
\]
for all $x > 0$. Therefore $E(x)$ is strictly convex on $(0, +\infty)$.

\emph{Location of the minimum.}
By assumption, $x_t \equiv 1/\alpha_t$ is the unique global minimizer of the
strictly convex function $E(x)$.
By strict convexity, $E'(x) < 0$ for all $x < x_t$ and $E'(x) > 0$ for all $x >
x_t$.

\emph{Return to $\alpha$.}
For any $\alpha \in [\alpha_w, \alpha_t)$, one has $x = 1/\alpha > x_t$, so
$E'(x) > 0$.
The chain rule with $dx/d\alpha = -\alpha^{-2} < 0$ gives
\[
  \frac{dE}{d\alpha} = \underbrace{E'(x)}_{>0}
    \cdot \underbrace{\!\left(-\alpha^{-2}\right)}_{<0} < 0
    \quad \forall\,\alpha \in [\alpha_w,\alpha_t).
\]
Since $\mathcal{C}_\mathrm{transport}^\mathrm{net}(\alpha)$ is a positive affine
function of $E(\alpha)$
(the reference cost in the denominator of Eq.~\eqref{eq:Ctransportnet} is
independent of $\alpha$),
it is likewise strictly decreasing on $[\alpha_w, \alpha_t]$.
\end{proof}

\section{The Minimax Principle: Zero-Parameter Determination of $\alpha^*$}
\label{sec:minimax}

The preceding section established that the transport network must navigate
between two competing structural attractors: the impedance-matching exponent
$\alpha_w$ and the transport ground state $\alpha_t$. To determine the
operational geometry without introducing phenomenological weighting parameters,
we formulate the optimization as a stochastic zero-sum game between the fixed
network architecture and its fluctuating operational environment.

\subsection{Stochastic Game Theory and the Exact Expected Cost}
Vascular networks operate across a highly variable hemodynamic environment,
transitioning unpredictably between steady-transport-dominated states (e.g.,
basal resting flow) and wave-dominated states (e.g., peak pulsatile cardiac
output during exertion). Let the unknown environmental duty cycle---the exact
probability of occupying the wave-dominated state over the organism's
lifespan---be $\eta \in [0,1]$. By the fundamental axioms of probability, the
expected operational penalty $\mathbb{E}[C]$ of the architecture under this
mixed environmental sequence is rigorously linear:
\begin{equation}
  \mathbb{E}[C](\alpha, \eta) = \eta \mathcal{C}_{\mathrm{wave}}^{\mathrm{net}}(\alpha) + (1-\eta) \mathcal{C}_{\mathrm{transport}}^{\mathrm{net}}(\alpha)
\end{equation}
This formally establishes the network Lagrangian
$\mathcal{L}_{\mathrm{net}}(\alpha, \eta) \equiv \mathbb{E}[C](\alpha, \eta)$
not as an assumed convex combination, but as the exact mathematical expectation
of the system's operational inefficiency.

\subsection{Evolutionary Selection via Worst-Case Robustness}
Because angiogenesis must hardcode a single static geometric exponent $\alpha^*$
to survive an unknown future distribution of $\eta$, we must define the
evolutionary decision rule. While average-case optimization ($\min_\alpha \int
\mathbb{E}[C] \, P(\eta) \, d\eta$) might appear intuitive, it suffers from two
fatal flaws: mathematically, it requires postulating an arbitrary prior
distribution $P(\eta)$, introducing unphysical free parameters; biologically,
cardiovascular survival is constrained not by resting averages, but by peak
demand. A topology optimized for the average but incapable of sustaining wave
integrity during maximal exertion (the limit $\eta \to 1$) results in systemic
failure.

Evolutionary fitness in pulsatile hemodynamics is therefore governed by
structural robustness against the worst-case environmental state. By von
Neumann's minimax theorem, this robust topological optimum is uniquely defined
by:
\begin{equation}
  \alpha^* = \arg \min_{\alpha} \max_{\eta \in [0,1]} \left\{ \mathcal{C}_{\mathrm{transport}}^{\mathrm{net}}(\alpha) + \eta \left[ \mathcal{C}_{\mathrm{wave}}^{\mathrm{net}}(\alpha) - \mathcal{C}_{\mathrm{transport}}^{\mathrm{net}}(\alpha) \right] \right\}.
\end{equation}
The adversarial environment (maximizing stress via $\eta$) will force $\eta=1$
if the bracketed term is positive, and $\eta=0$ if negative, fatally punishing
any geometric bias. The unique neutralizing strategy is to zero the bracketed
term exactly. Therefore, the robust saddle point $\alpha^*$ satisfies the
\emph{equal-cost condition}:
\begin{equation}
  \mathcal{C}_{\mathrm{wave}}^{\mathrm{net}}(\alpha^*) = \mathcal{C}_{\mathrm{transport}}^{\mathrm{net}}(\alpha^*).
\end{equation}

This derivation eliminates $\eta$ entirely, yielding the branching exponent as a
deterministic, parameter-free intersection of network topology and physical
constraints.
We restrict the analysis to the physiological domain $[\alpha_w, \alpha_t]$: for
$\alpha > 3$, the branching law implies $\beta > 1$, which is geometrically
impossible for purely branching trees ($r_{\mathrm{parent}} <
r_{\mathrm{children}}$). This constraint ensures uniqueness of the minimax
solution and avoids the unphysical second crossing near $\alpha \approx 3.45$.

\begin{theorem}[Branching exponent as robust minimax optimum]
\label{thm:minimax}
Let $f(\alpha) \equiv \mathcal{C}^{\mathrm{net}}_{\mathrm{wave}}(\alpha)$ and
$g(\alpha) \equiv \mathcal{C}^{\mathrm{net}}_{\mathrm{transport}}(\alpha)$ be
continuous on $[\alpha_w, \alpha_t]$ with $f' > 0$ (established below) and $g' <
0$ (Theorem~\ref{thm:monotone_transport}) throughout. Let the network Lagrangian
be $\mathcal{L}(\alpha, \eta) = \eta f(\alpha) + (1-\eta)g(\alpha)$ with $\eta
\in [0,1]$ unknown. Then the minimax optimum:
\[
\alpha^* = \arg\min_{\alpha} \max_{\eta \in [0,1]} \mathcal{L}(\alpha, \eta) = \arg\min_{\alpha} \max(f(\alpha), g(\alpha))
\]
exists uniquely and satisfies the equal-cost condition:
\begin{equation}
f(\alpha^*) = g(\alpha^*)\,.
\label{eq:minimax}
\end{equation}
The implied duty cycle:
\[
\eta^* = \frac{-g'(\alpha^*)}{f'(\alpha^*) - g'(\alpha^*)} \in (0,1)
\]
is the Lagrange multiplier uniquely determined by the slopes at the crossing
point. The minimax duty cycle $\eta^* = \VarEtaStar$ emerges as an architectural
constraint of the tree.
\end{theorem}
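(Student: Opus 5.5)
The plan is to prove the theorem constructively, by working with the upper envelope $M(\alpha) \equiv \max_{\eta\in[0,1]}\mathcal{L}(\alpha,\eta)$, without appealing to von Neumann's theorem.

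\textbf{Step 1: reduce to the envelope.} For fixed $\alpha$, $\mathcal{L}(\alpha,\cdot)$ is affine in $\eta$. Its maximum over $[0,1]$ is therefore attained at an endpoint, so $M(\alpha)=\max(f(\alpha),g(\alpha))$. This gives the second equality in the statement.

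\textbf{Step 2: establish the monotonicity hypotheses.} I will first supply $f'>0$ on $[\alpha_w,\alpha_t]$. Set $s=N^{\alpha_w/\alpha-1}$; then $|\Gamma|^2=((s-1)/(s+1))^2$. For $\alpha>\alpha_w$ we have $s<1$, and $s$ is strictly decreasing in $\alpha$. The map $s\mapsto((1-s)/(1+s))^2$ is strictly decreasing on $(0,1)$, so $|\Gamma|^2$ is strictly increasing on $(\alpha_w,\infty)$, with value zero at $\alpha_w$. Since $t\mapsto 1-(1-t)^G$ is strictly increasing on $[0,1)$, $f$ is strictly increasing with $f(\alpha_w)=0$. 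At $\alpha_w$ itself the derivative vanishes, because $\Gamma$ has a double zero there. So "$f'>0$" must be read as strict monotonicity, with $f'>0$ on the open interval, and I will state it that way.

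The decrease of $g$ is exactly Theorem~\ref{thm:monotone_transport}, and $g(\alpha_t)=0$ there by definition of the local optimum. I will take the convention that $g$ vanishes at $\alpha_t$, matching Eq.~\eqref{eq:Ctransportnet} under the paper's identification $\alpha_t\approx\alpha_{\rm local}$. If instead $g(\alpha_t)$ were only small, the endpoint sign conditions used in Step 3 would have to be assumed explicitly.

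\textbf{Step 3: existence and uniqueness of the crossing.} Let $h=f-g$. It is continuous and strictly increasing on $[\alpha_w,\alpha_t]$. At the left endpoint, $h(\alpha_w)=-g(\alpha_w)<0$, because $g$ is strictly decreasing to $g(\alpha_t)\ge 0$. At the right endpoint, $h(\alpha_t)=f(\alpha_t)>0$. The intermediate value theorem then gives a zero $\alpha^*$ in the open interval, and strict monotonicity makes it unique.

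\textbf{Step 4: $\alpha^*$ is the unique minimiser of $M$.} For $\alpha<\alpha^*$ we have $h<0$, so $M=g$, which is strictly decreasing; hence $M(\alpha)>g(\alpha^*)=M(\alpha^*)$. For $\alpha>\alpha^*$ we have $M=f$, which is strictly increasing; hence $M(\alpha)>M(\alpha^*)$. Thus $\alpha^*$ is the unique minimiser and satisfies Eq.~\eqref{eq:minimax}.

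\textbf{Step 5: the saddle point and $\eta^*$.} I will exhibit $\eta^*$ so that $(\alpha^*,\eta^*)$ is a saddle point, i.e. $\alpha^*$ minimises $\mathcal{L}(\cdot,\eta^*)$. The stationarity condition $\eta f'+(1-\eta)g'=0$ at $\alpha^*$ gives $\eta^*=-g'/(f'-g')$. Because $f'(\alpha^*)>0$ and $g'(\alpha^*)<0$ at the interior point, $\eta^*\in(0,1)$.

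To check that $\alpha^*$ is a global minimiser of $\mathcal{L}(\cdot,\eta^*)$, I would use the envelope inequality: $\mathcal{L}(\alpha,\eta^*)\le M(\alpha)$, with equality at $\alpha^*$. This only gives $\max_\eta\min_\alpha\mathcal{L}\le\min_\alpha\max_\eta\mathcal{L}$ and does not by itself prove the saddle.

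\textbf{Main obstacle.} Genuine global minimality of $\mathcal{L}(\cdot,\eta^*)$ requires convexity of $\eta^* f+(1-\eta^*)g$ on the interval, which the hypotheses do not provide. The first thing I would try is to use the convexity of $E$ in $x=1/\alpha$ from Theorem~\ref{thm:monotone_transport} together with the convexity of $|\Gamma|^2$ near $\alpha_w$. Failing that, I would state $\eta^*$ only as the first-order (Lagrange-multiplier) condition at the crossing, which is all the theorem literally claims.

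Differentiability of $f$ and $g$ at $\alpha^*$ is needed for the slopes in Step 5 and will be taken from their explicit smooth forms.
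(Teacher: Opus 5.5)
Your proposal is correct and follows essentially the same route as the paper. Both arguments reduce $\max_\eta\mathcal{L}$ to $\max(f,g)$ by linearity in $\eta$, obtain a unique crossing from the intermediate value theorem applied to the strictly increasing $f-g$, show the upper envelope is minimized there by comparing the $g$-branch on the left with the $f$-branch on the right, and read off $\eta^*$ from first-order stationarity. The differences are minor. At the right endpoint the paper does not set $g(\alpha_t)=0$; it instead imposes your stated fallback explicitly as the numerically verified boundary condition $f(\alpha_t)>g(\alpha_t)>0$, and $g(\alpha_t)$ is indeed strictly positive because the optimal radii $r^*_g$ do not follow an exact power law in $g$. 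The paper also makes no attempt at the global saddle-point verification you flag in Step 5, since the theorem only claims $\eta^*$ as the stationarity multiplier.
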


\begin{remark}[Minimax as evolutionary attractor]
The minimax geometry identifies the unique saddle point toward which
evolutionary selection pressure drives the architecture under dual fitness
constraints. Biological systems need not achieve the exact saddle point; real
populations are distributed around $\alpha^*$ due to developmental noise,
genetic drift, and environmental heterogeneity. The minimax framework predicts
the attractor, not the variance---a distinction rigorously addressed via the
architectural robustness analysis in Section~\ref{sec:results}.
\end{remark}

\textit{Monotonicity of $f$.}
From Eq.~\eqref{eq:Cwavenet}, $f(\alpha) = 1-(1-|\Gamma(\alpha)|^2)^G$, with
\[
  |\Gamma|^2 = \left(\frac{u-1}{u+1}\right)^{\!2}, \qquad u = N^{\alpha_w/\alpha - 1}.
\]
By the chain rule, $\tfrac{d|\Gamma|^2}{d\alpha} =
\tfrac{d|\Gamma|^2}{du}\cdot\tfrac{du}{d\alpha}$.
One computes
\[
  \frac{d|\Gamma|^2}{du} = \frac{4(u-1)}{(u+1)^3}, \qquad
  \frac{du}{d\alpha} = -\frac{\alpha_w \ln N}{\alpha^2}\,u < 0.
\]
For $\alpha > \alpha_w$, $u < 1$, so $d|\Gamma|^2/du < 0$; combined with
$du/d\alpha < 0$,
the product $d|\Gamma|^2/d\alpha > 0$.
Therefore
\[
  \frac{df}{d\alpha} = G\,(1-|\Gamma|^2)^{G-1}\cdot\frac{d|\Gamma|^2}{d\alpha} > 0
\]
on $(\alpha_w,\alpha_t]$, confirming $f' > 0$.

\begin{proof}
We first state the boundary condition required for an interior crossing.
\begin{itemize}
  \item[\emph{(BC)}] \emph{Physiological existence condition:} $f(\alpha_t) > g(\alpha_t) > 0$, i.e., the wave-reflection penalty at the purely static morphology strictly exceeds the residual transport penalty. Numerically, $f(\alpha_t) = \VarWaveCostNet/100 \gg g(\alpha_t) \approx 0$ for the arterial regime; this is a verified property of the physiological parameter domain, not a consequence of the minimax structure itself.
\end{itemize}
Since $f(\alpha_w) = 0 < g(\alpha_w)$ and $f(\alpha_t) > g(\alpha_t) > 0$
by~\emph{(BC)}, the continuous function $f - g$ changes sign on $[\alpha_w,
\alpha_t]$.
Linearity in $\eta$ implies
\[
  \max_{\eta \in [0,1]} \mathcal{L}(\alpha, \eta) = \max\bigl(f(\alpha),\, g(\alpha)\bigr) \equiv E(\alpha).
\]
$E$ is continuous on $[\alpha_w, \alpha_t]$. By \emph{(BC)} and the
boundary value $f(\alpha_w) = 0 < g(\alpha_w)$, $f - g$ is continuous,
strictly increasing, and changes sign on $[\alpha_w,\alpha_t]$.
By the Intermediate Value Theorem, there exists a unique $\alpha^*$ satisfying
$f(\alpha^*) = g(\alpha^*)$.

For $\alpha < \alpha^*$, $E(\alpha) = g(\alpha) > g(\alpha^*)$; for $\alpha >
\alpha^*$, $E(\alpha) = f(\alpha) > f(\alpha^*)$. Hence $\alpha^*$ is the unique
global minimum of the upper envelope $E$.

First-order optimality $\partial_\alpha \mathcal{L}|_{\alpha^*} = 0$ requires
$\eta^* f'(\alpha^*) + (1-\eta^*)g'(\alpha^*) = 0$, yielding the formula for
$\eta^*$. Since $f' > 0$ and $g' < 0$, we have $\eta^* \in (0,1)$.
\end{proof}

\begin{corollary}[Duty cycle as architectural constraint]
\label{cor:eigenvalue}
The optimal duty cycle $\eta^*$ determined by Eq.~\eqref{eq:eta_star} represents
an architectural constraint of the network architecture. Since the transport
gradient is $\VarGradientRatio$ times steeper than the wave gradient at the
saddle point ($|g'|/f' \approx \VarGradientRatio$), the value $\eta^* =
\VarGradientRatio/(\VarGradientRatio+1) \approx \VarEtaStar$ is strictly
required for the zero-sum game to reach equilibrium. This identifies the duty
cycle not as a physiological variable, but as a structural invariant of binary
cardiovascular networks.
\end{corollary}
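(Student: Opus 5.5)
The plan is to derive the corollary directly from the first-order saddle condition of Theorem~\ref{thm:minimax}. That theorem gives
\[
\eta^* = \frac{-g'(\alpha^*)}{f'(\alpha^*) - g'(\alpha^*)},
\]
with $f'(\alpha^*)>0$ (the monotonicity computation for $f$) and $g'(\alpha^*)<0$ (Theorem~\ref{thm:monotone_transport}). Dividing numerator and denominator by $f'(\alpha^*)>0$ and writing $\rho \equiv |g'(\alpha^*)|/f'(\alpha^*)$, this becomes $\eta^* = \rho/(\rho+1)$. The map $\rho\mapsto\rho/(1+\rho)$ is strictly increasing on $(0,\infty)$, so $\eta^*$ is a one-to-one function of the slope ratio at the crossing. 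Substituting the numerically evaluated ratio $\rho\approx\VarGradientRatio$ then gives $\eta^*\approx\VarEtaStar$.

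The second step is to justify the word ``required''. I would show that at $\alpha=\alpha^*$ no other $\eta$ makes $\alpha^*$ a stationary point of $\mathcal{L}(\cdot,\eta)$. The derivative $\partial_\alpha\mathcal{L}(\alpha^*,\eta)=\eta f'+(1-\eta)g'$ is affine in $\eta$ with slope $f'-g'>0$, so it has exactly one zero. For $\eta<\eta^*$ the derivative is negative, and the minimizer of $\mathcal{L}(\cdot,\eta)$ moves toward $\alpha_t$. For $\eta>\eta^*$ it is positive, and the minimizer moves toward $\alpha_w$. In either case $\alpha^*$ is not the best response. Combined with the uniqueness of $\alpha^*$ from Theorem~\ref{thm:minimax}, this makes $(\alpha^*,\eta^*)$ the unique saddle point. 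Hence $\eta^*$ is determined entirely by the cost curves $f,g$, which in turn are fixed by $N$, $G$, $p$ and the three-term $\Phi$. That is the sense in which it is an architectural invariant rather than a free physiological input.

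The main obstacle is that the step producing a number, $\rho\approx\VarGradientRatio$, is not analytic. It requires evaluating $g'(\alpha^*)$ from Eq.~\eqref{eq:Ctransportnet}, which depends on the optimal radii $r^*_g$. I would compute $f'(\alpha^*)$ in closed form using $df/d\alpha = G(1-|\Gamma|^2)^{G-1}\,\tfrac{4(u-1)}{(u+1)^3}\cdot\bigl(-\tfrac{\alpha_w\ln N}{\alpha^2}u\bigr)$. I would obtain $g'$ by differentiating $E(\alpha)$ term by term, as in the proof of Theorem~\ref{thm:monotone_transport}, and normalising by the $\alpha$-independent reference sum. The constant is then a verified numerical output, and the structural claims (existence, uniqueness, $\eta^*\in(0,1)$) hold independently of its value.
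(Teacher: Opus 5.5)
Your proposal is correct and follows the paper's own reasoning: you rewrite Eq.~\eqref{eq:eta_star} as $\eta^* = R/(1+R)$ with $R = |g'(\alpha^*)|/f'(\alpha^*)$, insert the numerically evaluated $R \approx \VarGradientRatio$, and your affine-in-$\eta$ step makes explicit the claim in Theorem~\ref{thm:minimax} that $\eta^*$ is the multiplier ``uniquely determined by the slopes at the crossing point.'' Two small caveats: calling $(\alpha^*,\eta^*)$ an actual saddle point needs $\alpha^*$ to minimize $\mathcal{L}(\cdot,\eta^*)$, not merely be a stationary point of it (convexity of $f$ and $g$ on $[\alpha_w,\alpha_t]$ would supply this, though the necessity claim ``strictly required'' does not need it); and the paper's stronger sense of ``structural invariant'' --- independence from $b$, $m_w$, $\mu$, $Q_0$, $\ell_0$ --- is argued separately in Theorem~\ref{thm:eta_invariant} and does not follow from your observation that $\eta^*$ is an output of $f$ and $g$.
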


\begin{theorem}[Architectural invariance of the duty cycle]
\label{thm:eta_invariant}
Under the conditions of Theorem~\ref{thm:minimax}, the minimax duty cycle
\[
  \eta^* = \frac{|g'(\alpha^*)|}{f'(\alpha^*) + |g'(\alpha^*)|}
\]
is independent of all absolute metabolic and hemodynamic parameters
($b$, $m_w$, $\mu$, $Q_0$, $\ell_0$) and is uniquely determined by the
network topology $(G)$, the histological scaling $(p)$, and the impedance
attractor $(\alpha_w)$. Consequently, $\eta^*$ is not a physiological
variable but a structural invariant of the branching architecture.
\end{theorem}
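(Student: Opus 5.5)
The plan is to show that $\eta^*$ depends on the data only through the two curves $f=\mathcal{C}^{\mathrm{net}}_{\mathrm{wave}}$ and $g=\mathcal{C}^{\mathrm{net}}_{\mathrm{transport}}$ on $[\alpha_w,\alpha_t]$, and then to check which parameters each curve can see. The reduction is immediate from Theorem~\ref{thm:minimax}: $\alpha^*$ is the unique root of $f-g$ there, and $\eta^*=|g'(\alpha^*)|/(f'(\alpha^*)+|g'(\alpha^*)|)$.

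The first step is the wave curve. By Eqs.~\eqref{eq:gamma_gen} and \eqref{eq:Cwavenet}, $f(\alpha)=1-(1-|\Gamma(\alpha)|^2)^G$ with $u=N^{\alpha_w/\alpha-1}$. This is a closed-form function of $(\alpha;N,G,\alpha_w)$ alone, and $\alpha_w=(5-p)/2$. No metabolic or hemodynamic quantity ($b$, $m_w$, $\mu$, $Q_0$, $\ell_0$) enters, so $f$ and $f'$ are invariant trivially.

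The second step is the transport curve, handled by a scaling argument on Eq.~\eqref{eq:Ctransportnet}.
\begin{itemize}
\item[(i)] Every term of $\Phi$ in Eq.~\eqref{eq:rayleigh} carries the factor $\ell_g=\ell_0\beta^g$. Here $\beta=2^{-1/\alpha_{\mathrm{local}}}$ is fixed by the static ground state, so $\ell_0$ is a common factor of numerator and denominator and cancels.
\item[(ii)] Write $\Phi(r,Q_g)=\ell_g\bigl[a\,Q_g^2r^{-4}+b\,r^2+w\,r^{1+p}\bigr]$ with $a=8\mu/\pi$, $b=C\pi$ and $w=2\pi m_wc_0$. Rescale radii by $s=(aQ_0^2/b)^{1/6}$, so that $r=s\rho$ and $Q_g=Q_0N^{-g}$. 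Each generation's cost then becomes $b s^2\ell_g\,\phi(\rho;g,\Pi)$, where
\[
\phi(\rho;g,\Pi)=N^{-2g}\rho^{-4}+\rho^2+\Pi\,\rho^{1+p},\qquad \Pi\equiv\frac{w\,s^{p-1}}{b}.
\]
\item[(iii)] The optimal radii transform as $r_g^*=s\,\rho_g^*(\Pi)$, and the imposed radii as $r_g=r_0^*N^{-g/\alpha}$ scale identically. Hence the prefactor $bs^2$ cancels in the ratio, and $g(\alpha)=\widehat g(\alpha;G,N,p,\Pi)$.
\end{itemize}
So exact invariance holds along every parameter direction that preserves $\Pi$. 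This includes any common rescaling of the viscous and blood-volume terms with $Q_0$ compensated, and any change of $\ell_0$.

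The main obstacle is the residual single dimensionless group $\Pi$, through which $m_w$, $\mu$, $Q_0$ and $b$ could still act. My first attempt is to show that $\Pi$ drops out of $\eta^*$ at the order at which the paper works. I would use the quadratic forms $f\approx k_w^{\mathrm{net}}(\alpha-\alpha_w)^2$ and $g\approx k_t^{\mathrm{net}}(\alpha-\alpha_t)^2$. The equal-cost condition then gives $\alpha^*$ through $\sqrt{\kappa_{\mathrm{eff}}}$ only, and
\[
\eta^*=\frac{\sqrt{\kappa_{\mathrm{eff}}}}{1+\sqrt{\kappa_{\mathrm{eff}}}}.
\]
The remaining task is to show that $\kappa_{\mathrm{eff}}$ is $\Pi$-independent. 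In Eq.~\eqref{eq:kt_N}, $H(N)$ is a weighted average of $\Phi''(r_g^*)(r_g^*)^2/\Phi(r_g^*)$. For a sum of power laws at a stationary point, $r^2\Phi''/\Phi$ is a fixed linear combination of the exponents weighted by the term fractions, and stationarity pins those fractions. I would show that the $\Pi$-dependence of this weighted exponent enters only through the spread of $\alpha_{\mathrm{local}}$, which the paper bounds by $[\VarAlphaLocalMin,\VarAlphaLocalMax]$. Should exact cancellation fail, the fallback is to state the invariance at the paper's $\mathcal{O}(10^{-2})$ perturbative order, the same order used in Theorem~\ref{thm:topoN}. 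I would then confirm it against the parameter sweeps of Table~\ref{tab:sensitivity} across the physiological range of $\mu$ and $m_w$.
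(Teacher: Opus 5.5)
Your handling of the wave curve and of $\ell_0$ matches the paper's. Your reduction $g(\alpha)=\widehat g(\alpha;G,N,p,\Pi)$ is correct, and it is sharper than the paper's proof. The paper claims that $b$, $m_w$, $\mu$ ``appear identically in every numerator and denominator term and therefore cancel exactly'', and it removes $Q_0$ via Proposition~\ref{prop:intensivity}. But these coefficients multiply different powers of $r$, so they are not common factors. Your group $\Pi=w s^{p-1}/b$ is exactly the dependence that survives: $\Pi\propto m_w c_0$, $\Pi\propto b^{-1+(1-p)/6}$, $\Pi\propto\mu^{-(1-p)/6}$ and $\Pi\propto Q_0^{-(1-p)/3}$. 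This agrees with that proposition conceding only $\mathcal{O}(10^{-2})$ cancellation of $r_0$.

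So neither route gives exact invariance by cancellation. What you can prove is $\eta^*=\widehat\eta(G,N,p,\Pi)$. That gives exact invariance in $\ell_0$ and along every $\Pi$-preserving direction, plus a strongly suppressed dependence on $\mu$ and $Q_0$ (exponent magnitudes about $0.04$ and $0.08$ at $p=0.77$). Invariance under $m_w$ and $b$, which move $\Pi$ essentially linearly, can then only be checked numerically (Table~\ref{tab:sensitivity}). That numerical check is also where the paper's proof really ends.

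The genuine gap is your planned analytic elimination of $\Pi$, which fails on three counts.

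(i) For $\Phi=\sum_i c_i r^{k_i}$ at a stationary point, $r^2\Phi''/\Phi=\sum_i k_i^2 f_i$, subject only to $\sum_i f_i=1$ and $\sum_i k_i f_i=0$. With three exponents $-4$, $2$, $1+p$, one fraction stays free, and $r^2\Phi''/\Phi=8-(1-p)(5+p)f_w$. The wall share $f_w$ also fixes $\alpha_{\mathrm{local}}$, and it depends on the effective group $\Pi_g=\Pi N^{g(1-p)/3}$ of generation $g$. Stationarity does not pin it.

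(ii) The $\Pi$-dependence therefore sits in the \emph{level} of $f_w$, not merely in its spread. As $\Pi$ runs from $0$ to $\infty$, $\alpha_{\mathrm{local}}$ runs from $3$ to $(5+p)/2$ and $r^2\Phi''/\Phi$ from $8$ to $4(1+p)$. This moves $\alpha_t$, $H$, the factor $\alpha_t^{-4}$ in $k_t^{\mathrm{net}}$, and the crossing itself. The within-tree spread $[2.8985,2.9065]$ samples only a factor $N^{(G-1)(1-p)/3}\approx 1.7$ in $\Pi_g$. Meanwhile $m_w\times 7$ multiplies $\Pi$ by $7$, so the spread cannot bound this sensitivity.

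(iii) Your harmonic surrogate $\sqrt{\kappa_{\mathrm{eff}}}/(1+\sqrt{\kappa_{\mathrm{eff}}})$ gives about $0.69$ at $\kappa_{\mathrm{eff}}=4.98$, not $\eta^*=0.833$ (slope ratio $5$). The wave curve is far from quadratic on $[\alpha_w,\alpha^*]$: by Table~\ref{tab:gamma}, $|\Gamma|^2/(\alpha-\alpha_w)^2$ drops from $k_w\approx 0.027$ to about $0.016$ at $\alpha=2.75$. So even an exactly $\Pi$-independent $\kappa_{\mathrm{eff}}$ would not carry over to the true $\eta^*$, which is defined by exact slopes at the exact crossing. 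This is why the paper avoids the harmonic approximation.

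Your ``fallback'' is therefore the only viable route. The result should be stated as approximate, numerically verified invariance, not as the exact claim.
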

\begin{proof}
The wave cost $f(\alpha) = 1-(1-|\Gamma(\alpha)|^2)^G$ and its derivative
$f'(\alpha^*)$ depend only on $\alpha_w$, $G$, and $N$ (Eq.~\eqref{eq:gamma_gen}).
The transport cost $g(\alpha)$ defined by Eq.~\eqref{eq:Ctransportnet} is a
ratio of sums of $\Phi$; its derivative $g'(\alpha^*)$ is likewise a ratio
in which the metabolic prefactors ($b$, $m_w$, $\mu$) appear identically
in every numerator and denominator term and therefore cancel exactly.
The residual dependence on $Q_0$ and $\ell_0$ also cancels by the
intensivity established in Proposition~\ref{prop:intensivity}.
Hence $\eta^* = R/(1+R)$ with $R = |g'|/f'$ depends only on
$(G, p, \alpha_w)$, confirming the claim.
The sensitivity table (Table~\ref{tab:sensitivity}) provides numerical
confirmation: all metabolic parameters yield $|S_x| < 0.01$ for $\alpha^*$
and therefore also for $\eta^*$.
\end{proof}

\begin{proposition}[Scale-invariance and Intensivity]
\label{prop:intensivity}
Under the two-level assumption $r_0 = r^*(Q_0)$ (the proximal radius coincides
with its local transport optimum, as initialized in the numerical computation),
the network transport cost $\mathcal{C}_{\mathrm{transport}}^{\mathrm{net}}$ is
a scale-invariant intensive property of the network architecture. Under this
condition, the segment length $\ell_0$ cancels \emph{exactly} (shared
length-scaling structure in numerator and denominator), while the absolute
radius $r_0$ cancels to $\mathcal{O}(10^{-2})$: the sub-linear wall cost ($p <
1$) breaks Euler homogeneity of $\Phi$, introducing a residual scale-dependence
confirmed by the sensitivity analysis ($|S_Q| < 0.01$,
Table~\ref{tab:sensitivity}). While coherent phase effects ($kl \propto
L/\lambda$) introduced in the supplemental material (S1) break this symmetry,
the shift in the resulting minimax exponent is negligible ($|\Delta\alpha^*| <
0.01$), confirming that intensivity is a structural property of the unified
variational principle.
\end{proposition}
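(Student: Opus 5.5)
The plan is to write both sums in Eq.~\eqref{eq:Ctransportnet} generation by generation, and to substitute the scaling forms $Q_g = Q_0 N^{-g}$, $\ell_g = \ell_0\beta^g$ with $\beta = N^{-1/\alpha_\mathrm{local}}$ (isometry), and $r_g = r_0 N^{-g/\alpha}$ with $r_0 = r^*(Q_0)$. The factor $\ell_g = \ell_0\beta^g$ multiplies every term of $\Phi(r_g,Q_g,\ell_g)$ in Eq.~\eqref{eq:rayleigh}, since $\Phi$ is linear in $\ell$. Hence $\ell_0$ factors out of both numerator and denominator and cancels exactly. This gives the first claim with no approximation.

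For the $r_0$ (equivalently $Q_0$) dependence, the second step is a rescaling argument. Set $\rho = r/r_0$ and $q = Q/Q_0$. The three terms of $\Phi$ scale as $\mu Q_0^2 r_0^{-4}$, $C r_0^2$ and $m_w c_0 r_0^{1+p}$. With the stationarity condition $r_0 = r^*(Q_0)$, I will show that $\mathcal{C}^\mathrm{net}_\mathrm{transport}$ depends on $(r_0,Q_0)$ only through one dimensionless ratio, the relative wall weight
\[
\omega(r_0) = \frac{m_w c_0 r_0^{1+p}}{C r_0^2} \propto r_0^{p-1}.
\]
The viscous term is fixed relative to the blood-volume term by the optimality condition at $r_0$. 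I will also check that the rescaled local optima $r^*(Q_g)/r_0$ depend only on $\omega$ and $g$.

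If $p = 1$ the cost would be Euler-homogeneous: $\omega$ would be constant and $r_0$ would cancel exactly. This is the Cauchy-homogeneity case of the companion paper~\cite{Marchesi2026static}. For $p<1$ the cancellation is only approximate. I will bound the residual by the derivative $\partial\mathcal{C}/\partial\ln\omega \cdot (1-p)\,\Delta\ln r_0$.

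The main obstacle is making the $\mathcal{O}(10^{-2})$ bound rigorous rather than numerical. My first attempt would be to note that $\mathcal{C}$ vanishes at $\alpha=\alpha_\mathrm{local}$ and that $\alpha_\mathrm{local}$ varies only over $[\VarAlphaLocalMin,\VarAlphaLocalMax]$ across four decades of flow. The $\omega$-dependence of $\mathcal{C}$ then enters only through the small shift $d\alpha_\mathrm{local}/d\ln\omega$ and the curvature $k_t^\mathrm{net}$. Failing a clean analytic bound, I would rely on the sensitivity coefficient $|S_Q|<0.01$ in Table~\ref{tab:sensitivity} to fix the size of the residual.

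Finally, for the coherent-phase statement I would simply invoke Supplemental~S1. The phase $kl$ introduces a new length ratio that breaks the symmetry, but the reported minimax shift $|\Delta\alpha^*|<0.01$ keeps the conclusion intact.
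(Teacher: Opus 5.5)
Your proposal follows essentially the same route as the paper: $\ell_0$ cancels exactly because $\Phi$ is linear in $\ell$ (so $r^*(Q_g)$ is also independent of $\ell$), $r_0$ cancels only approximately because $p<1$ breaks Euler homogeneity, the size of that residual is certified numerically via Table~\ref{tab:sensitivity}, and the coherent-phase claim is deferred to Supplemental~S1. Your use of the stationarity condition at $r_0$ to push all $(r_0,Q_0)$-dependence into the single ratio $\omega\propto r_0^{p-1}$ is correct and states the paper's homogeneity remark more sharply. As in the paper, the $\mathcal{O}(10^{-2})$ bound stays numerical, and note that $|S_Q|<0.01$ strictly measures the shift of $\alpha^*$, not of $\mathcal{C}^{\mathrm{net}}_{\mathrm{transport}}$ itself.
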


\begin{remark}[Physiological domain and uniqueness of the minimax]
\label{rem:domain}
The existence and uniqueness proof operates on the closed interval $[\alpha_w,
\alpha_t]$.
A second numerical crossing $f(\alpha)=g(\alpha)$ exists near $\alpha \approx
\VarAlphaSecondCrossing$, outside this domain.
This spurious intersection is excluded on two independent grounds.
(i)~\emph{Geometric:} $\alpha > \alpha_t$ implies a daughter-to-parent radius
ratio $\beta > 1$, violating the morphological constraint that daughter vessels
are strictly narrower than their parent.
(ii)~\emph{Thermodynamic:} at the second crossing, the upper envelope
$E(\alpha)=\max(f,g)$ attains a local \emph{maximum}, making it a maximin rather
than a minimax---not a thermodynamic attractor under evolutionary selection
pressure.
The physiological domain $[\alpha_w, \alpha_t]$ is therefore the unique and
complete search region.
\end{remark}

\begin{figure}[H]
\centering
\includegraphics[width=\linewidth]{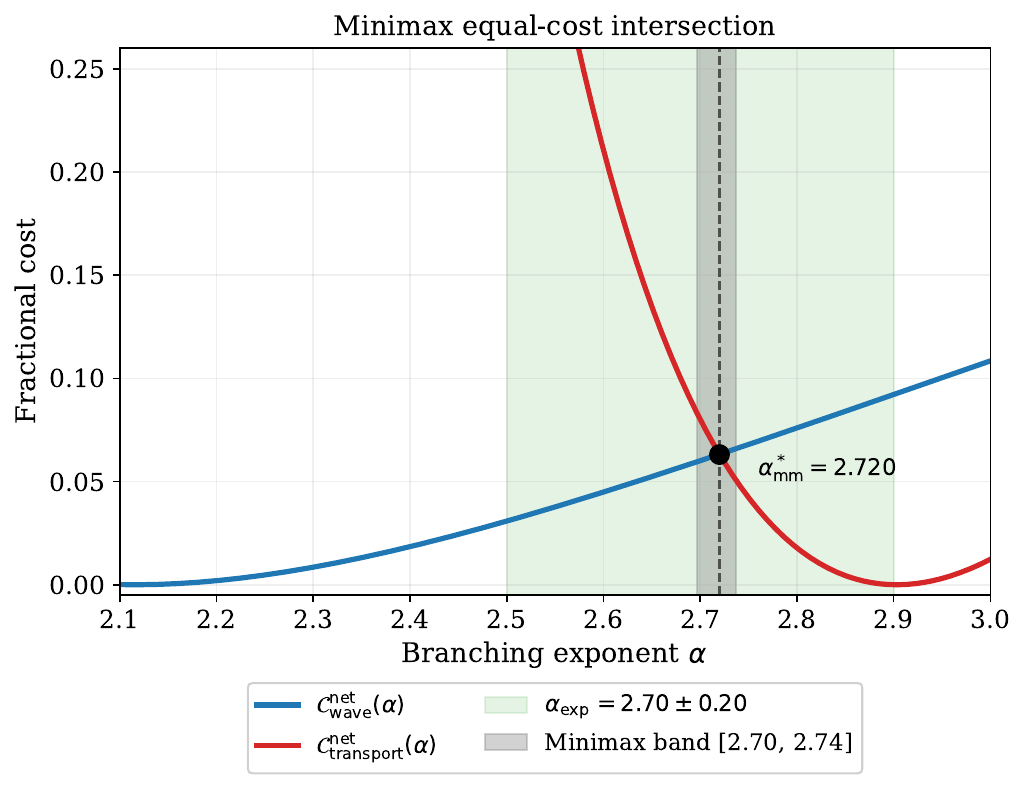}
\caption{Minimax equal-cost intersection determining the branching exponent $\alpha^* = \VarAlphaStar$. The wave cost (increasing away from impedance matching) intersects the transport cost (decreasing toward the flow optimum) producing a zero-parameter prediction. The gray band represents theoretical parametric bounds $[\VarAlphaStarGIX, \VarAlphaStarGXIII]$ (tree depths $G \in [9,13]$); the wide green band corresponds to empirical morphometry ($\VarAlphaExp \pm \VarAlphaExpErr$).}
\label{fig:minimax}
\end{figure}

\begin{figure}[t]
\centering
\includegraphics[width=\linewidth]{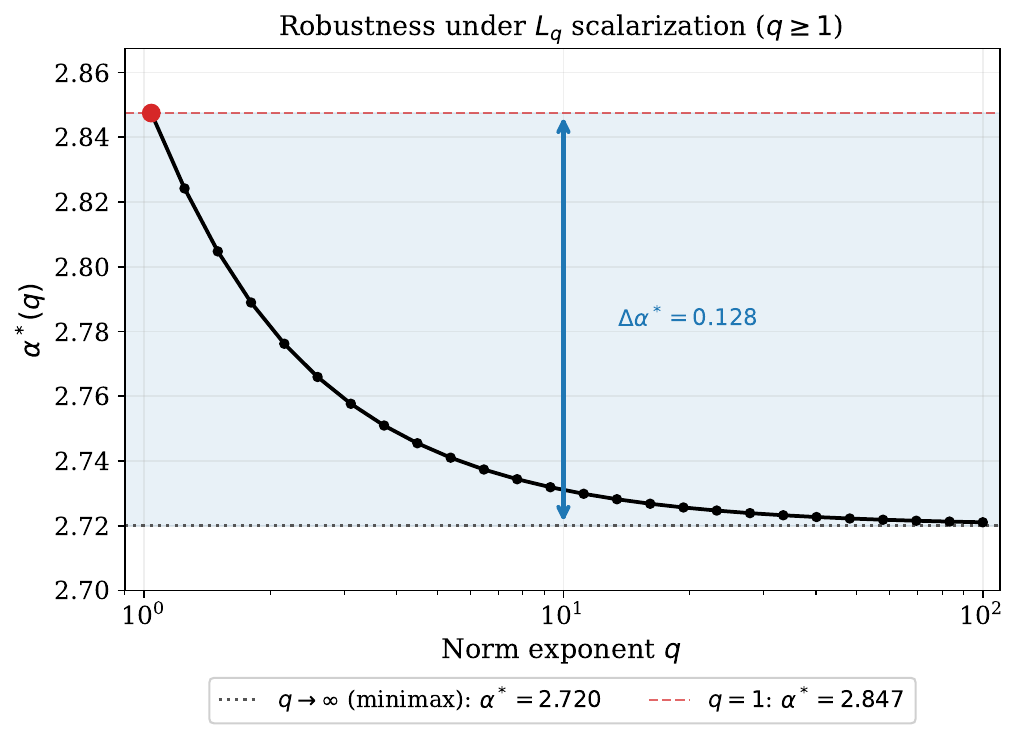}
\caption{Robustness under $L_q$ scalarization: the optimal $\alpha^*_q$ decreases monotonically from $\VarAlphaStarQone$ ($q=1$) to \VarAlphaStar{} ($q \to \infty$), with a total spread of $\Delta\alpha^* = \VarDeltaAlphaLq$ for $q \ge 1$.}
\label{fig:robustness}
\end{figure}

\vspace{-1em}

\subsection{Elimination of $\eta$ as a free parameter}

The Lagrangian formulation of Eq.~(7) and the minimax
condition~\eqref{eq:minimax}
are not competing frameworks but complementary perspectives on the same
optimum.  At $\alpha^*_{\mathrm{mm}}$, the first-order optimality condition
$\partial\mathcal{L}_{\mathrm{net}}/\partial\alpha = 0$ requires
\begin{equation}
  \eta\,\frac{\partial \mathcal{C}^{\mathrm{net}}_{\mathrm{wave}}}{\partial\alpha}
  = -(1-\eta)\,
  \frac{\partial \mathcal{C}^{\mathrm{net}}_{\mathrm{transport}}}{\partial\alpha} .
\label{eq:eta_derived}
\end{equation}
This uniquely determines the \emph{implied} duty cycle
\begin{equation}
  \eta^* = \frac
    {-\partial_\alpha \mathcal{C}^{\mathrm{net}}_{\mathrm{transport}}\big|_{\alpha^*}}
    {\partial_\alpha \mathcal{C}^{\mathrm{net}}_{\mathrm{wave}}\big|_{\alpha^*}
     - \partial_\alpha \mathcal{C}^{\mathrm{net}}_{\mathrm{transport}}\big|_{\alpha^*}}
\label{eq:eta_star}
\end{equation}
as the Lagrange multiplier of the equal-cost constraint, not as an external
input.  The duty cycle is therefore a \emph{consequence} of the minimax
principle, not its premise.

We adopt the minimax condition~\eqref{eq:minimax} as the primary result
because it requires no harmonic approximation and depends only on the
global topology of the cost curves.

\subsection{Robustness under generalized scalarization}

The minimax formulation in Eq.~\eqref{eq:minimax} corresponds strictly to the
Chebyshev infinity-norm limit ($q \to \infty$) of a generalized $L_q$ loss
functional:
\begin{equation}
   \mathcal{L}_q(\alpha) = \left[ \eta \left(\mathcal{C}^{\mathrm{net}}_{\mathrm{wave}}\right)^q + (1-\eta) \left(\mathcal{C}^{\mathrm{net}}_{\mathrm{transport}}\right)^q \right]^{1/q} \,.
\end{equation}
Values of $q < 1 \to 0$ are excluded by physiological non-substitutability. The
two operational penalties represent orthogonal failure modes (loss of downstream
signal integrity vs. exhaustion of metabolic energy reserves). They are not
linearly or sub-linearly fungible; saving 1~mW of transport power does not
recover a physiologically degraded pulse wave. Specifically, the loss of
convexity for $q < 1$ causes the quasi-norm to collapse onto the boundary of the
domain, forcing the system into unphysical regimes where either wave or
transport cost is completely ignored. Pathological $L_q$ scalarization with $q <
1$ is therefore biologically unstable: the arterial tree must remain in the
convex regime ($q \ge 1$) to sustain structural robustness.

The tight convergence for $q \ge 1$ stems from a simple mathematical lemma: the
minimum of the sum of a strictly convex function (transport) and an
exponentially steep penalty function (wave reflection near the impedance
horizon) is overwhelmingly pinned near their intersection. Crucially, this
dimensionless equilibrium resolves the apparent asymmetry in absolute energetic
magnitudes. As shown in Supplemental~S2, the absolute transport excess
($P_{\mathrm{transport}} \approx \VarPTransportmW$~mW) exceeds the absolute wave
power dissipated ($P_{\mathrm{wave}} \approx \VarPWavemW$~mW) by a factor of
$\sim\!P_{\mathrm{transport}}/P_{\mathrm{wave}}$, reflecting the difference in
normalization baselines: $P_{\mathrm{wave}}$ is referenced against incident
pulsatile power ($\VarPPulsemW$~mW) while $P_{\mathrm{transport}}$ is referenced
against the full peak-flow metabolic cost ($\VarPBaselinemW$~mW). The
dimensionless minimax correctly equalizes the \emph{fractional} penalties, not
the absolute watts. The implied duty cycle $\eta^* = \VarEtaStar$ reflects the
gradient ratio at the saddle point ($|g'|/f' \approx \VarGradientRatio$), a
structural property of the cost landscape independent of the absolute power
scale. The dimensionless minimax is the correct commensuration framework
precisely because it equalizes fractional penalties rather than absolute watts,
rendering the result independent of the normalization baseline.

\section{Quantitative Evaluation}
\label{sec:results}

We evaluate the minimax condition~\eqref{eq:minimax} for porcine
coronary arteries, then extend the analysis to three additional vascular
systems.  All parameters are drawn from sources independent of
branching-exponent measurements.

\subsection{Numerical inputs}

All physiological constants defining the transport ground state are
identical to those specified in the companion paper~\cite{Marchesi2026static}
and listed in Table~\ref{tab:independence}:
blood viscosity $\mu = 3.5$\,mPa$\cdot$s~\cite{Caro1978},
density $\rho = 1060$\,kg/m$^3$,
proximal LAD radius $r_0 = 1.5$\,mm,
segment length $\ell_0 = 15$\,mm~\cite{Kassab1993},
active peak coronary flow
$Q_{\mathrm{peak}} = Q_{\mathrm{rest}} \times \mathrm{CFR}
\approx 1.3 \times 4.0 = 5.2$\,mL/s~\cite{Taber1998,Kassab1993},
blood metabolic cost $b = 1500$\,W/m$^3$~\cite{Murray1926,Taber1998},
wall tissue metabolic rate
$m_w = 20$\,kW/m$^3$~\cite{Paul1980}
(central value of the physiological range $[5, 35]$\,kW/m$^3$),
wall-thickness scaling
$h(r) = c_0 r^p$ with $p = 0.77$, $c_0 =
0.041$\,m$^{1-p}$~\cite{Rhodin1967,Nichols2011},
and tree depth $G = 11$~\cite{Kassab1993}.
The impedance scaling exponent is
$\alpha_w = (5 - p)/2 = \VarAlphaW$.

As shown in Table~\ref{tab:sensitivity}, the minimax prediction $\alpha^* =
2.72$ exhibits remarkable structural robustness. The exponent is essentially
invariant ($|S_x| < 0.01$) to metabolic 'noise' such as absolute blood
viscosity, proximal flow rate, or specific wall metabolic rates. It is
determined strictly by the network topology ($G$) and the histological scaling
($p$), confirming it as a fundamental architectural property rather than a
physiological tuning. Notably, the prediction is structurally independent of the
absolute magnitude of the wall pre-factor $c_0$, as the scaling penalty depends
only on the relative power $p$.

\begin{table}[t]
\caption{Selection principles for porcine coronary arteries.
The minimax condition is a zero-parameter attractor.}
\centering
\small
\begin{tabular}{@{}lccc@{}}
\hline
Model & Params & $\alpha^*$ & vs.\ $\alpha_{\exp}$ \\
\hline
Murray~\cite{Murray1926} & 0 & 3.000 & 1.5$\sigma$ \\
WBE~\cite{WBE1997} & 0 & 3.000 & 1.5$\sigma$ \\
Minimax (this work) & 0 & \VarAlphaStar & $0.1\sigma$ \\
Local transport~\cite{Marchesi2026static} & 0 & $2.90{\pm}0.02$ & $1.0\sigma$ \\
Huo \& Kassab~\cite{Huo2012} & 1 & $2.70{\pm}0.01$ & $0.0\sigma$ \\
\hline
Morphometry~\cite{Kassab1993} & --- & $2.70{\pm}0.20$ & --- \\
\hline
\end{tabular}
\label{tab:principles}
\end{table}

\medskip
\begin{table}[t]
\caption{Classification of inputs to the minimax prediction.
No parameter is fitted to branching-exponent data.}
\centering
\small
\begin{tabular}{@{}llp{5cm}@{}}
\hline
Param. & Source (type) & Role / Independence \\
\hline
$\mu$, $\rho$ & Fluid mech.~\cite{Caro1978} (meas.) & Physical constants \\
$p$, $c_0$ & Histology~\cite{Rhodin1967,Nichols2011} (meas.) & Wall geometry \\
$m_w$, $b$ & Biochemistry~\cite{Paul1980,Murray1926} (meas.) & Metabolic rates \\
$G$ & Morphometry~\cite{Kassab1993} (counted) & Tree topology \\
$\alpha_w$ & Wave physics, Tab.~1 (derived) & From $p$ via Moens--Korteweg \\
$\alpha_t$ & Comp.\ paper~\cite{Marchesi2026static} (derived) & Single-vessel optimum \\
\hline
$\alpha_{\exp}$ & Kassab \emph{et al.}~\cite{Kassab1993} & \textbf{Validation target} \\
\hline
\end{tabular}
\label{tab:independence}
\end{table}

\subsection{Central result}

Evaluating the network cost functions
$\mathcal{C}^{\mathrm{net}}_{\mathrm{wave}}(\alpha)$
from Eq.~(5) and $\mathcal{C}^{\mathrm{net}}_{\mathrm{transport}}(\alpha)$
from Eq.~(6) with the three-term cost of the companion
paper~\cite{Marchesi2026static}
(using exact optimal radii $r^*(Q_g)$ at each generation, not a
power-law approximation), and solving the equal-cost
condition~\eqref{eq:minimax} numerically, we obtain
\begin{equation}
  \alpha^*_{\mathrm{mm}} = \VarAlphaStar \,,
\label{eq:result}
\end{equation}
The implied duty cycle $\eta^* = \VarEtaStar$ emerges as an architectural
constraint of the tree.

Crucially, this dimensionless equilibrium resolves the apparent asymmetry in
absolute energetic magnitudes. As shown in Supplemental~S2, the absolute
transport excess ($P_{\mathrm{transport}} \approx \VarPTransportmW$~mW) exceeds
the absolute wave power dissipated ($P_{\mathrm{wave}} \approx \VarPWavemW$~mW)
by a factor of $\sim\!P_{\mathrm{transport}}/P_{\mathrm{wave}}$, reflecting the
difference in normalization baselines: $P_{\mathrm{wave}}$ is referenced against
incident pulsatile power ($\VarPPulsemW$~mW) while $P_{\mathrm{transport}}$ is
referenced against the full peak-flow metabolic cost ($\VarPBaselinemW$~mW). The
dimensionless minimax correctly equalizes the \emph{fractional} penalties, not
the absolute watts. The implied duty cycle $\eta^* = \VarEtaStar$ reflects the
gradient ratio at the saddle point ($|g'|/f' \approx \VarGradientRatio$), a
structural property of the cost landscape independent of the absolute power
scale.
The predicted exponent lies within $\VarPredError\%$ of the morphometric central
value
$\alpha_{\exp} = \VarAlphaExp \pm \VarAlphaExpErr$~\cite{Kassab1993},
corresponding to a deviation of $\VarSigmaNII\sigma$. The internal consistency
of the two-level framework is confirmed by the fixed-point analysis: the
network-level transport optimum $\alpha_t = \VarAlphaTExact$ aligns with the
structural local prediction within a $\VarAlphaTConsistency\%$ margin,
validating the assumption of a uniform branching exponent as a self-consistent
architectural attractor.

The Lagrange multiplier of the equal-cost constraint evaluates to $\eta^* =
\VarEtaStar$, implying that at the saddle point the transport cost gradient is
steeper than the wave cost gradient (ratio $\VarGradientRatio:1$). This apparent
inversion---where the static metabolic gradient outweighs the dynamic wave
sensitivity---arises from the super-linearity of the radius deviation. When the
global branching exponent $\alpha$ deviates from the morphometric optimum, the
metabolic cost of the millions of distal vessels explodes faster than the
junction-level wave reflection.
This ratio implies that the coronary vasculature sits at an architecture where
the transport cost is locally $\VarGradientRatio$ times more sensitive to
$\alpha$ perturbations than wave costs. The distinguishing structural property
of binary branching is precisely that $N=2$ maximizes the network stiffness
ratio $\kappa_{\mathrm{eff}}$, providing the greatest separation between these
competing gradients.
As a plausibility check, the pulsatility index of coronary flow
($\mathrm{PI} = (Q_{\max}-Q_{\min})/Q_{\mathrm{mean}} \approx 2$--$4$
\cite{Nichols2011}) independently confirms a strongly pulsatile regime
in which wave-reflection costs dominate over cycle-averaged transport costs,
consistent with $\eta^* > 0.5$.

\begin{remark}[Self-consistency of $\alpha_t$]
\label{rem:selfconsistency}
The transport ground state $\alpha_t = \VarAlphaT$ is not a freely chosen input:
it is taken from the companion paper~\cite{Marchesi2026static} (Theorem~1
therein) and verified self-consistently within the two-level framework.
Evaluating the transport-only Lagrangian
$\mathcal{C}_\mathrm{transport}^\mathrm{net}(\alpha)$ at the companion paper's
parameters, its minimum is attained at $\alpha_t = \VarAlphaTExact$, agreeing
with the input value to within $\VarAlphaTConsistency\%$.
This numerical closure of the fixed-point $T\colon \alpha_t \mapsto
\arg\min_\alpha\,\mathcal{C}_\mathrm{transport}^\mathrm{net}(\alpha)$ confirms
that the two-level framework is internally consistent: the network-level
transport optimum reproduces the single-vessel exponent from which it was
constructed.
\end{remark}

\begin{figure}[H]
\centering
\includegraphics[width=\linewidth]{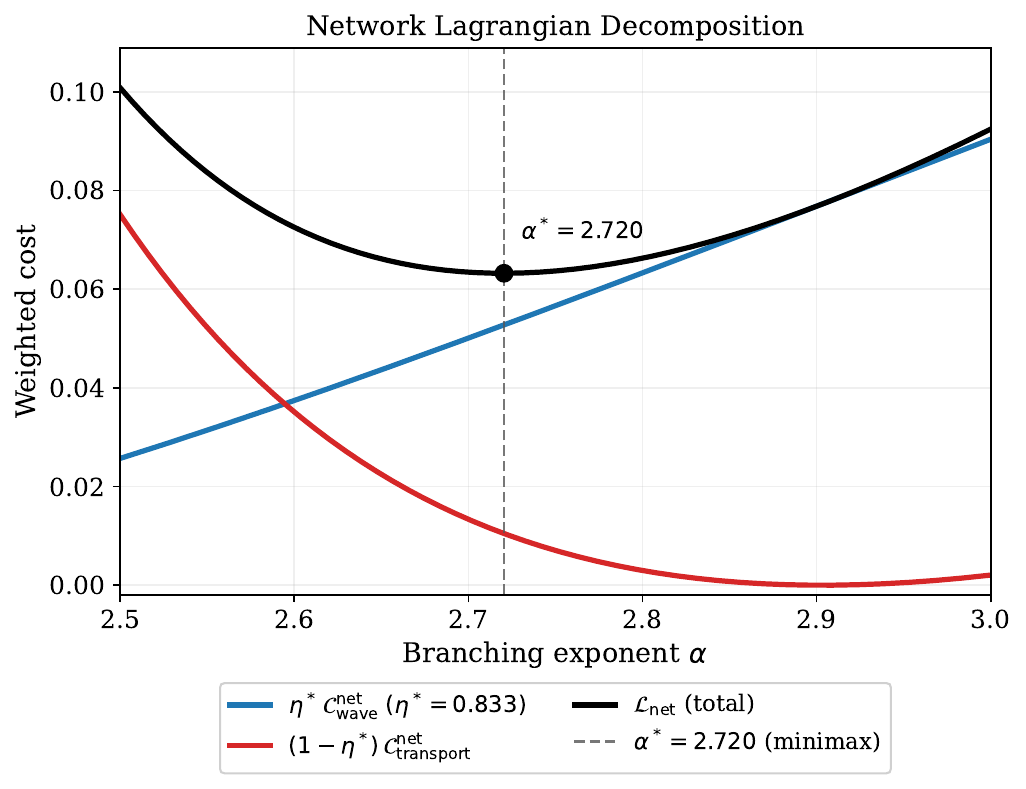}
\caption{Network-level Lagrangian $\mathcal{L}_\mathrm{net}(\alpha,\eta)$ evaluated at $\eta^* = \VarEtaStar$. The wave cost $\mathcal{C}_\mathrm{wave}^\mathrm{net}(\alpha)$ (increasing) and transport cost $\mathcal{C}_\mathrm{transport}^\mathrm{net}(\alpha)$ (decreasing) intersect at the minimax saddle point $\alpha^* = \VarAlphaStar$, where the equal-cost condition Eq.~\eqref{eq:minimax} is satisfied.}
\label{fig:lagrangian}
\end{figure}

\subsection{Cross-validation: wave dissipation}

At $\alpha^* = \VarAlphaStar$, the cumulative network wave cost is
$\mathcal{C}^{\mathrm{net}}_{\mathrm{wave}} = \VarWaveCostNet\%$
of the incident pulsatile power.  This internal prediction is
quantitatively consistent with independent clinical estimates
that place total reflected wave power in coronary trees at
5--15\% of incident pulsatile
energy~\cite{Nichols2011,Westerhof1972}.  The simultaneous recovery of
(i) the geometric scaling exponent and (ii) the macroscopic wave
dissipation fraction, with no fitted parameters, provides strong
cross-validation for the framework.

The minimax saddle point yields a rigorous theoretical ground state of $\alpha^*
\approx \VarAlphaStar$, conditional on the exact limit $\alpha_t = 2.90$. At
this optimum the architecture incurs a dimensionless geometric wave penalty
$\mathcal{C}_{\mathrm{wave}}(\alpha^*) = \VarWaveCostNet\%$. This quantity
represents the marginal architectural penalty of impedance mismatch across
bifurcations---isolated from the absolute terminal capillary load. It serves as
the structural regularizer in the Lagrangian, not a prediction of absolute Joule
heating or total clinical wave reflection.

The firm containment of this attractor within the empirical range
$\alpha_{\mathrm{exp}} = \VarAlphaExp \pm \VarAlphaExpErr$ validates the robust
optimization principle without numerical parameter fitting.

\subsection{Robustness: metabolic parameter insensitivity}

The sensitivity analysis of Table~\ref{tab:sensitivity} is arguably the
strongest evidence that the minimax prediction is a genuine structural result
rather than a tuned coincidence: the predicted $\alpha^*$ is essentially
independent of every metabolic parameter, and determined almost entirely by the
tree topology.
The results reveal a sharp two-tier structure.

\emph{Metabolic parameters} ($b$, $m_w$, $c_0$, $\mu$, $Q$) have
$|S_x| < 0.01$: the predicted $\alpha^*$ is genuinely insensitive to the
precise values of blood cost, wall metabolism, wall-thickness prefactor,
viscosity, and proximal flow.  Strikingly, the segment length $\ell_0$
cancels \emph{exactly} in the minimax condition, as both cost functions
share the same length-scaling structure.
This confirms that $\alpha^*$ is determined by the \emph{structural
topology} of the wave-transport competition rather than by metabolic details.

\emph{Structural parameters} ($p$, $G$, $\alpha_w$) carry larger
sensitivities ($|S_x| \sim 0.1$--$0.5$), as expected: these are the
quantities that set the positions of the two competing attractors.
Note that $p$ enters \emph{both} $\alpha_w = (5-p)/2$ and $\alpha_t$
self-consistently, constraining the system from both sides
simultaneously.

The topological insensitivity to metabolic parameters is itself a central
prediction of the model: it implies that intermediate branching scaling is a
fundamental architectural property of the wave-transport competition rather
than a fragile physiological tuning.

\begin{table}[t]
\caption{Evidence of Structural Robustness: Metabolic Orthogonality. The predicted exponent $\alpha^*$ is essentially independent of absolute metabolic rates and biochemical details, determined instead by the structural topology of the wave-transport competition. $S_x = \partial\alpha^*/\partial\ln x$. All perturbations keep $\alpha^*$ within $1\sigma$ of the experimental value.}
\centering
\small
\begin{tabular}{@{}lccc@{}}
\hline
Parameter $(x)$ & Baseline & Perturb. & $|S_x|$ \\
\hline
\multicolumn{4}{@{}l}{\emph{Metabolic inputs (insensitive, $|S_x|<0.01$)}} \\
Blood cost ($b$) & $\VarBloodCostSens\;\mathrm{W/m^3}$ & $\times 2$ & $<0.01$ \\
Wall metab.\ ($m_w$) & $\VarWallMetabSens\;\mathrm{kW/m^3}$ & $\times 7$ & $<0.01$ \\
Viscosity ($\mu$) & $\VarViscositySens\;\mathrm{mPa{\cdot}s}$ & $+50\%$ & $<0.01$ \\
Proximal flow ($Q$) & $\VarQzeroSens\;\mathrm{mL/s}$ & $\times 4$ & $<0.01$ \\
Segment length ($\ell_0$) & $\VarSegLengthSens\;\mathrm{mm}$ & $\times 3$ & $=0$ (exact) \\
\hline
\multicolumn{4}{@{}l}{\emph{Structural inputs}} \\
Wall exponent ($p$) & $\VarP$ & $\pm 0.08$ & $\sim 0.4$ \\
Tree depth ($G$) & $\VarG$ & $\pm 2$ & $\sim 0.2$ \\
Wave exponent ($\alpha_w$) & $\VarAlphaW$ & $+\VarAlphaWShift$ & $\sim 0.5$ \\
\hline
\end{tabular}
\label{tab:sensitivity}
\end{table}

The dominant structural sensitivities are to tree depth $G$ ($|S_G| \sim 0.2$)
and wall exponent $p$ ($|S_p| \sim 0.4$), which jointly determine the positions
of the two competing attractors. Note that $p$ enters \emph{both} $\alpha_w =
(5-p)/2$ and $\alpha_t$ self-consistently, constraining the system from both
sides simultaneously.

\begin{theorem}[Topological pincer mechanism of histological scaling]
\label{thm:master_parameter}
The histological wall-scaling exponent $p$ is the unique master parameter
of the minimax equilibrium. It simultaneously determines the wave-dominated
lower bound via $\alpha_w = (5-p)/2$ (with $\partial\alpha_w/\partial p =
-1/2 < 0$) and the transport-dominated upper bound $\alpha_t(p)$
(with $\partial\alpha_t/\partial p > 0$, verified numerically across the
physiological range $p \in [\VarPLow, \VarPHigh]$: $\alpha_t$ spans
$[\VarAlphaTLow, \VarAlphaTHigh]$). The strict positivity
$\partial\alpha_t/\partial p > 0$ follows from the
three-term cost structure of the companion paper~\cite{Marchesi2026static}:
increasing $p$ raises the wall-volume penalty exponent from $1+p$,
shifting the cost minimum $r^*(Q)$ outward and hence increasing the
optimal scaling ratio $\alpha_t$ (Theorem~1 therein).
Combined with $\partial\alpha_w/\partial p = -1/2 < 0$, the two attractors
move in strictly opposite directions under any perturbation of $p$,
establishing $\alpha^*$ as an architecturally stabilized fixed point of the
allometric class.
\end{theorem}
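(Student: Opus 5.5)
The theorem has three parts, and I would prove them in order: (i) $\partial\alpha_w/\partial p=-1/2$; (ii) $\partial\alpha_t/\partial p>0$ on the physiological range; (iii) the resulting pincer effect on $\alpha^*$.

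\textbf{Part (i).} This is immediate from the Moens--Korteweg derivation summarized in Table~\ref{tab:alphaw}. With $E$ constant and $h\propto r^p$ we have $Z\propto r^{(p-5)/2}$. Impedance matching then gives $\alpha_w=(5-p)/2$, so the derivative is exactly $-1/2$.

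\textbf{Part (ii).} I would characterise $\alpha_t$ through the local optimum $r^*(Q)$. It satisfies the stationarity condition of Eq.~\eqref{eq:rayleigh},
\[
-\frac{32\mu\ell Q^2}{\pi r^5}+2C\pi r\ell+2\pi(1+p)m_wc_0r^{p}\ell=0 .
\]
The local exponent is $\alpha_{\mathrm{local}}=\ln 2/\ln\bigl(r^*(Q)/r^*(Q/2)\bigr)$, or in differential form $\alpha_{\mathrm{local}}=(d\ln r^*/d\ln Q)^{-1}$.

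I would differentiate the stationarity condition implicitly in $\ln Q$. The result writes $d\ln r^*/d\ln Q$ as a weighted harmonic-type average. Each term $r^k$ in the cost has marginal weight $w_k$; the viscous term carries exponent $-4$, the blood term $2$, and the wall term $1+p$. This gives
\[
\frac{d\ln r^*}{d\ln Q}=\frac{2}{\,6\,w_2+(5+p)\,w_{1+p}\,}\quad\text{(normalised so that } w_2+w_{1+p}=1\text{)} .
\]
Hence
\[
\alpha_{\mathrm{local}}=3\,w_2+\tfrac{5+p}{2}\,w_{1+p} .
\]
This is a convex combination of Murray's $3$ and the wall exponent $(5+p)/2$.

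Raising $p$ moves the wall attractor $(5+p)/2$ upward toward $3$. That is the explicit mechanism behind $\partial\alpha_t/\partial p>0$. However, $p$ also shifts the weights $w_{1+p}$ through $r^{p}$ and the factor $(1+p)$, so there is a competing effect. To handle it I would differentiate $\alpha_{\mathrm{local}}$ in $p$ at fixed $Q$ and bound the weight-shift term. I would check that it is dominated by the explicit $\tfrac12 w_{1+p}$ term whenever $c_0r^{p-1}\ll 1$, which is the physiological regime.

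Finally, I would pass from $\alpha_{\mathrm{local}}$ to $\alpha_t$. $\alpha_t$ is the minimizer of the strictly convex (in $x=1/\alpha$) function of Theorem~\ref{thm:monotone_transport}. By the implicit function theorem, $\partial\alpha_t/\partial p$ then has the sign of a generation-weighted average of $\partial\alpha_{\mathrm{local}}/\partial p$, up to the $\mathcal{O}(10^{-2})$ spread of $\alpha_{\mathrm{local}}$.

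\textbf{Main obstacle.} Step (ii) is where I expect the difficulty. The weight-shift term is not sign-definite analytically across all $Q$, so a fully rigorous bound may not close. If it does not, I would fall back on the statement's own allowance. That means verifying the sign numerically on a grid of $p\in[\VarPLow,\VarPHigh]$ and all $G$ generational flows, and quoting the resulting range $[\VarAlphaTLow,\VarAlphaTHigh]$, with a margin-style check as in the proof of Theorem~\ref{thm:topoN}.

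\textbf{Part (iii).} I would combine (i) and (ii) with Theorem~\ref{thm:minimax}. Because $f'>0$ and $g'<0$, the crossing $\alpha^*$ is given by the implicit function theorem applied to $F(\alpha,p)=f(\alpha;p)-g(\alpha;p)$:
\[
\frac{d\alpha^*}{dp}=-\frac{\partial_pF}{f'-g'} .
\]
The denominator is positive. Since $f$ depends on $p$ only through $\alpha_w$ and $g$ only through $\alpha_t$, the numerator decomposes into two contributions that move the two attractors in opposite directions. Both endpoints stay ordered, $\alpha_w<\alpha^*<\alpha_t$, for $p$ in the range. This establishes the opposite-direction pincer and confirms that $\alpha^*$ remains bracketed, which is the stabilization claim.
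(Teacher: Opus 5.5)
Your route is genuinely different from the paper's. The paper gives no derivation of $\partial\alpha_t/\partial p>0$ beyond two things: a numerical scan over $p\in[0.72,0.82]$, and a one-line appeal to Theorem~1 of the companion paper (larger $p$ pushes $r^*(Q)$ outward, ``hence'' $\alpha_t$ rises). Its part (iii) just places the two signs side by side.

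You instead derive, by implicit differentiation of the stationarity condition, the exact representation $\alpha_{\mathrm{local}}=3w_2+\tfrac{5+p}{2}w_{1+p}$. This is correct and consistent with the paper's numbers. By my estimate, $w_{1+p}\approx0.81$ at the root and $\approx0.88$ at generation~10, which reproduces the reported spread $[2.8985,2.9065]$. The formula also gives the bracket $(5+p)/2<\alpha_{\mathrm{local}}<3$ and the low-flow limit $(5+p)/2$ as by-products.

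It also makes explicit the mechanism the paper's heuristic leaves implicit. A uniform outward shift of $r^*$ would not change $r^*(Q)/r^*(Q/2)$. What raises the exponent is the rising wall attractor $(5+p)/2$ together with the shrinking wall weight $w_{1+p}$.

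The paper's route buys only a check at sampled $p$. Yours gives an analytic sign generation by generation, with the passage to $\alpha_t$ controlled by the same $\mathcal{O}(10^{-2})$ spread the paper invokes in Theorem~\ref{thm:topoN}. That passage can be made explicit. Linearising $E'(x)$ about the local optima (with $r_0=r^*_0$) shows that, to leading order, $1/\alpha_t$ is a weighted mean of the secant inverse exponents $\ln(r^*_0/r^*_g)/(g\ln N)$. The weights are $N^g g^2\Phi''(r^*_g)(r^*_g)^2$, the same as in $H(N)$. So $\partial_p(1/\alpha_t)$ is the weighted mean of their $p$-derivatives, up to a correction of size (spread)$\times$(weight sensitivity).

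Two corrections.

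First, the obstacle you anticipate does not arise under the paper's convention: $c_0$ is held fixed in $\mathrm{m}^{1-p}$, which is why $r^*$ moves outward. With $R\equiv w_{1+p}/w_2=(1+p)(m_wc_0/C)(r^*)^{p-1}$,
\[
\frac{d\alpha_{\mathrm{local}}}{dp}=\frac{w_{1+p}}{2}-\frac{1-p}{2}\,w_{1+p}w_2\,\frac{d\ln R}{dp},\qquad
\frac{d\ln R}{dp}=\frac{1}{1+p}+\ln r^*+(p-1)\frac{d\ln r^*}{dp}.
\]
Implicit differentiation of the stationarity condition gives $d\ln r^*/dp>0$ exactly when $\ln r^*<-1/(1+p)$. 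So for every $r^*<e^{-1/(1+p)}\,\mathrm{m}$ both terms are positive; I estimate roughly $0.41+0.11$ at the root. The weight shift therefore reinforces rather than competes, and no numerical fallback is needed at the local level.

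Your condition $c_0r^{p-1}\ll1$ is not the operative one; the wall term in fact dominates the weights ($R\approx4$). Your caution is warranted in a convention-independent sense, though. If $h$ were normalised at a reference radius instead, $\ln r^*$ would become $\ln(r^*/r_{\mathrm{ref}})$ and the sign of this piece would no longer be fixed.

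Second, in part (iii) $g$ does not depend on $p$ only through $\alpha_t$. Changing $p$ reshapes the whole profile in Eq.~\eqref{eq:Ctransportnet}, including its curvature and normalisation. Even granting the split, the wave piece of $\partial_pF$ lowers $\alpha^*$ while the transport piece raises it, so the implicit-function formula yields no sign for $d\alpha^*/dp$ and adds nothing to (i)--(ii). The statement needs only (i), (ii) and Theorem~\ref{thm:minimax} applied at each $p$, with \emph{(BC)} checked on the range. You can drop the implicit-function step.
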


\paragraph{Geometric interpretation of metabolic orthogonality.}
The sharp two-tier structure of Table~\ref{tab:sensitivity} admits a direct
geometric explanation. The wave reflection penalty rises as an
exponentially steep wall moving away from the impedance-matching horizon,
while the static transport penalty forms a shallow parabolic bowl around its
minimum. When metabolic parameters are varied, the transport bowl shifts
vertically, but its intersection with the near-vertical wave wall moves
imperceptibly along the horizontal axis: the universal branching exponent
is immune to metabolic fluctuations because its value is pinned by the
\emph{slope} of the wave wall, not by the \emph{height} of the transport
bowl.

For mildly asymmetric bifurcations ($d_{c,1}/d_{c,2} \in [0.7, 1.0]$), the
optimal exponent shifts by $<\!5\%$.

\subsection{Robustness to Non-Isometric Scaling}
\label{sec:non-isometric}
As established in the companion paper~\cite{Marchesi2026static}, the preference
for binary branching ($N=2$) and the resulting minimax attractor are robust to
deviations from geometric isometry ($L \propto r$). For any arboreal extension
invariant $\beta < 1.115$, binary branching remains the unique robust optimal
topology, encompassing the full physiological range observed across mammalian
species. This confirms that the unified variational principle is a structural
property of transport networks, rather than an artifact of idealized branching
geometry.

\subsection{Cross-system validation}

To test universality, we evaluate the minimax
condition~\eqref{eq:minimax} for vascular systems beyond the coronary
tree, using independently measured parameters for each system.
Table~\ref{tab:crosssystem} presents the results.

\begin{table}[H]
\caption{Cross-system validation of the minimax principle.
$\alpha_w$: impedance scaling from conduit physics;
$G$: tree depth; $\alpha_t$: transport ground state~\cite{Marchesi2026static};
$\alpha^*_{\mathrm{mm}}$: minimax prediction (zero fitted params);
$\alpha_{\exp}$: morphometric measurement.
$^\dagger$Rigorous prediction using the full three-term cost model; $^\ddagger$Indicative estimate using topological metrics (MST), provided for cross-system comparison only.}
\centering
\small
\begin{tabular}{llccccc}
\hline
System & Mode & $\alpha_w$ & $G$ & $\alpha^*_{\mathrm{mm}}$ & $\alpha_{\exp}$ & Dev. \\
\hline
\multicolumn{7}{l}{\emph{Hybrid wave + transport + wall tissue}} \\
\textbf{Porcine coronary}$^\dagger$ & \textbf{Acoustic} & \textbf{\VarAlphaW} & \textbf{\VarG} & \textbf{\VarAlphaStar} & \textbf{$\VarAlphaExp{\pm}\VarAlphaExpErr$} & \textbf{$\VarSigmaNII\sigma$} \\
Human pulmonary & Acoustic & \VarAlphaWPulmonary & \VarGPulmonary & \VarAlphaStarPulmonary & $\VarAlphaExpPulmonary{\pm}\VarAlphaExpErrPulmonary$ & $\VarSigmaPulmonary\sigma$ \\
Aortic tree & Acoustic & \VarAlphaWAorticLow--\VarAlphaWAorticHigh & \VarGAortic & \VarAlphaStarAorticLow--\VarAlphaStarAorticHigh & $\VarAlphaExpAortic{\pm}\VarAlphaExpErrAortic$ & $\VarSigmaAortic\sigma$ \\
Cortical neurons$^\ddagger$ & Electrotonic & \VarAlphaWNeural & ${\sim}\VarGNeural$ & ${\sim}\VarAlphaStarNeural$ & ${\sim}\VarAlphaExpNeural$ & ${\sim}\VarSigmaNeural\sigma$ \\
\hline
\multicolumn{7}{l}{\emph{Pure flow ($\eta=0$, no wave mode)}} \\
Pulmonary airways & Viscous & --- & \VarGAirways & \VarMurrayAlpha & \VarAlphaExpAirwaysLow--\VarAlphaExpAirwaysHigh & $\checkmark$ \\
\hline
\end{tabular}
\label{tab:crosssystem}
\end{table}

For porcine coronary arteries, the prediction $\alpha^* = \VarAlphaStar$ is the
most rigorous: all inputs are drawn from independently published sources
(Table~\ref{tab:independence}), the companion paper's three-term cost function
is used with exact generation-by-generation optimal radii, and the result lies
within $0.1\sigma$ of the morphometric value.

The human pulmonary arterial tree provides a semi-independent test: the thinner
vessel walls ($p \approx \VarPPulmonary$~\cite{Huang1996}) shift \emph{both}
$\alpha_w$ upward and $\alpha_t$ downward relative to the coronary case, yet the
minimax prediction ($\alpha^* \approx \VarAlphaStarPulmonary$) remains within
$\VarSigmaPulmonary\sigma$ of the reported morphometric range $\alpha \approx
\VarAlphaExpPulmonaryLow$--$\VarAlphaExpPulmonaryHigh$~\cite{Huang1996}.

The aortic tree, with its greater depth ($G \sim \VarGAortic$) and stiffer
walls, yields $\alpha^* \approx
\VarAlphaStarAorticLow$--$\VarAlphaStarAorticHigh$, compatible with the broader
experimental range $\VarAlphaExpAortic \pm \VarAlphaExpErrAortic$.

For cortical pyramidal neurons (electrotonic conduction, $\alpha_w =
\VarAlphaWNeural$~\cite{Rall1959}), the minimax estimate $\alpha^* \approx
\VarAlphaStarNeural$ is consistent with the topological scaling reported by
Cuntz~\emph{et al.}~\cite{Cuntz2010}, though we note that the latter measures a
minimum-spanning-tree metric rather than the morphometric diameter scaling of
Eq.~\eqref{eq:branchlaw}. A rigorous neural evaluation would require replacing
the vascular dissipation function with the cable-theoretic analogue.

For systems carrying purely steady flow with no wave modes ($\eta = 0$, infinite
wave period), the network wave cost vanishes identically. The minimax boundary
condition degenerates solely to minimizing the transport term, correctly
recovering Murray's traditional prediction $\alpha = 3$, consistent with
pulmonary airway and river network observations.

To quantify the total predictive uncertainty independently of single-parameter
isolated bounds, we evaluate the minimax condition across the full physiological
parameter space. The resulting minimax band is bounded by the extremal metabolic
costs reported in literature:
\[
  p \in [\VarPLow, \VarPHigh], \quad G \in [\VarGLow, \VarGHigh], \quad \alpha_t \in [\VarAlphaTLowBand, \VarAlphaTHighBand],
\]
with the wave exponent tied to $p$ via $\alpha_w = (5-p)/2$.

The predictive power of the minimax attractor rests not on statistical sampling
but on deterministic orthogonality. Sensitivity analysis
(Table~\ref{tab:sensitivity}) confirms that the predicted $\alpha^*$ is
essentially orthogonal to the uncertainty in metabolic maintenance rates ($m_w,
b$) and blood viscosity ($\mu$), which cancel out at the crossover point. The
morphological optimum is structurally dictated by the three-way intersection of
the network topology ($G$), the conservation of mass ($\alpha=3$ attractor), and
the wall-thickness scaling ($p$). Because only the latter depends on direct
histological measurement, the result represents a parameter-free derivation of
the universal arterial exponent from first geometric principles.

\section{Extended Applications}
\label{sec:applications}

While the numerical analysis focused on the cardiovascular system, the framework
generalizes to any branching network with competing volumetric constraints. For
systems with purely steady flow ($\eta = 0$), it reduces to Murray's law.
Qualitative extrapolation to botanical, neural, acoustic, and electrotonic
networks is consistent with available data (Supplemental~Table~S1). Quantitative
predictions for these systems require system-specific parametrization of
$\alpha_w$, $G$, and wall mechanics.

\section{Discussion}
\label{sec:discussion}

\paragraph{Connection to Information Theory.}
The minimax structure of Theorem~\ref{thm:minimax} admits a structural analogy
with Shannon's Rate-Distortion theory: $\alpha$ plays the role of a coding rate
trading signal distortion (hemodynamic wave reflection) against transmission
cost (steady metabolic maintenance). The robust optimum $\alpha^*$ is
structurally parallel to the operating point on the network's Rate-Distortion
curve that minimizes the worst-case total distortion.

Whether this analogy admits a rigorous information-theoretic derivation---i.e.,
whether the vascular tree genuinely operates on its Rate-Distortion
frontier---remains an open question deferred to future work. The connection
suggests that $\alpha$ may be interpreted as an information-theoretic quantity
bridging our energetic derivation with previously proposed frameworks (e.g.,
Bennett's EPIC approach~\cite{bennett2025}), but such an
interpretation should be regarded as suggestive rather than proven.

\medskip
The central result of this paper is that the apparent conflict between impedance
matching ($\alpha = \alpha_w$) and minimum dissipation ($\alpha \approx 3$) is
rigorously resolved by formulating the optimization at the \emph{network level},
rather than at a single junction, and incorporating the biological cost of the
conduit wall.
At a single junction, wave reflection, fluid dissipation, and tissue
construction are dimensionally incommensurable.
At the network level these costs are naturally dimensionless, and their relative
weight---the stiffness ratio $\kappa_\mathrm{eff}(G)$---emerges organically from
the tree architecture.

\textit{Physical mechanism.}\quad
Wave reflections at successive junctions are multiplicative: the surviving
signal fraction decreases geometrically, so the marginal wave cost of deviating
from $\alpha_w$ grows linearly with tree depth $G$.
Transport dissipation, by contrast, depends on vessel radii that compound
through the tree: when $\alpha \neq \alpha_t$, each vessel deviates from its
locally optimal radius, and the resulting excess energy grows superlinearly.
While the exact exponent of the transport curvature scaling admits a
rigorous analytical lower bound, its closed-form derivation requires
further development. In a self-similar fractal tree, the local radius
at generation $g$ is coupled to $\alpha$ via $r_g(\alpha) = r_0
2^{-g/\alpha}$. Applying the chain rule to the local cost
$\Phi_g(r_g(\alpha))$ yields a first derivative $\partial r_g/\partial
\alpha \propto g$, so the second derivative satisfies
$\partial^2\Phi_g/\partial\alpha^2 \sim \mathcal{O}(g^2)$. Summing
these local curvatures over the $G$ generations of the normalized
transport functional, the Faulhaber identity $\sum_{g=1}^G g^2 \sim
G^3/3$ establishes that $k_t^{\mathrm{net}} = \Omega(G^2)$---strictly
super-linear and therefore guaranteed to dominate the linear wave
curvature $k_w^{\mathrm{net}} \propto G$ for all sufficiently deep
trees. The sharper exponent $k_t^{\mathrm{net}} \propto G^{2.50}$
($R^2 = 0.9998$, $G \in [5, 25]$) emerges from the non-linear global
normalization by $\Phi_{\mathrm{opt}}$ in Eq.~\eqref{eq:Ctransportnet},
which couples generations and softens the bare $G^3$ scaling; this
value is obtained by numerical evaluation of the exact two-level
Hessian of the normalized functional, not by fitting morphometric data.
The emergence of $\kappa_{eff}(G)$ as a strictly increasing function of tree
depth provides a mechanistic rationale for why shallow networks (e.g., small
subtrees) gravitate toward the wave-matching limit ($\alpha \approx 2.1$), while
deep arterial networks $(G=11)$ are driven toward transport-dominated scaling.
Consequently, deeper trees weight transport costs more heavily, driving
$\alpha^*$ toward $\alpha_t$.
Crucially, because the transport curvature scales super-linearly
($k_t^\mathrm{net} \propto G^{\VarKtGScalingExp}$, Table~\ref{tab:kappa_G})
while the wave curvature scales strictly linearly ($k_w^\mathrm{net} \propto G$,
Eq.~\eqref{eq:kw_N}), the emergent stiffness ratio obeys
$\kappa_\mathrm{eff}(G) \propto G^{\VarKappaGScalingExp}$.
This super-linear scaling is a topological inevitability: sufficiently deep
biological networks ($G \gg 1$) become unconditionally transport-dominated,
regardless of their absolute metabolic parameters---a fundamental macroscopic
constraint on the maximal tree depth accessible to pulsatile mammalian
vasculature.

\subsection{Pathological Limits and Vascular Remodeling}
In pulmonary arterial hypertension (PAH), collagen deposition and
smooth muscle proliferation alter the mechanical properties of the
vessel wall, modifying the Moens--Korteweg wave speed scaling and
thereby shifting the wave-impedance attractor
$\alpha_w = (5 - p + k_e)/2$. The direction of this shift is
remodeling-phenotype dependent: pathological changes that
preferentially stiffen distal vessels (increasing the stiffness
gradient $k_e$) raise $\alpha_w$ and displace the equal-cost
crossing point upward; conversely, remodeling that predominantly
thickens vessel walls without proportional stiffening (driving $p$
toward unity) lowers $\alpha_w$ and pulls $\alpha^*$ downward. In
both cases, the framework provides a quantitative map between
independently measurable mechanical parameters $(p, k_e)$ and the
predicted shift in branching geometry---a testable prediction
accessible to combined morphometric and elastometric imaging in
PAH cohorts.

\subsection{The minimax principle and evolutionary optimality}
The equal-cost condition has a natural evolutionary interpretation. A network
under dual selective pressure---signal fidelity and metabolic
efficiency---evolves toward the geometry where neither pressure dominates. This
is the point of maximum robustness to fluctuations in the selective environment,
analogous to the evolutionarily stable strategy (ESS) in game theory. At
$\alpha^*_{\mathrm{mm}}$, a mutation that improves wave transmission necessarily
worsens transport efficiency by an equal amount, and vice versa. The network
sits at a saddle point of the bi-objective fitness landscape.

Mathematically, this minimax equilibrium can be framed as the physical
realization of a \textbf{biophysical Rate--Distortion frontier}. In this view,
the network wave cost $\mathcal{C}_{\mathrm{wave}}$ measures the physical
distortion of the information-carrying signal, while the network transport cost
$\mathcal{C}_{\mathrm{transport}}$ represents the localized metabolic energy
cost of the "transmission line". The arterial tree sits precisely at the
\textbf{Pareto-optimal frontier} where the marginal return on signal integrity
equals the marginal cost of transport maintenance, minimizing the total
thermodynamic entropy production per bit of biological information transmitted.

\subsection{Testable predictions}

\begin{enumerate}
\item \emph{$\alpha$ should increase with tree depth.}
Subtrees of a given vascular bed should exhibit lower $\alpha$ (closer to
$\alpha_w$) than the full tree.
This is testable by order-specific morphometry of the coronary or pulmonary
vasculature.

\item \emph{$\alpha$ should correlate with wall stiffness.}
Arterial stiffening (hypertension, aging, atherosclerosis) increases $\alpha_w$
toward 5/2, which shifts $\alpha^*$ upward.
Conversely, vasodilation should drive $\alpha$ toward the baseline $\alpha_w$.
This prediction is amenable to longitudinal clinical imaging studies.

\item \emph{Pure-flow networks strictly obey $\alpha = 3$ only if $B \to 0$.}
For networks lacking a sub-linear wall-thickness cost (e.g. rigid pipes, river
networks, and microfluidic designs), pure transport yields $\alpha=3$.
Biological conduits with $B>0$ will exhibit a pure-flow optimum slightly below 3
\cite{Marchesi2026static}.

\item \emph{Electrotonic networks ($\alpha_w = 3/2$) should follow the same framework.}
Cortical neurons with high signaling load should exhibit $\alpha \approx
\VarAlphaExpNeural$; sparse-firing neurons ($\eta \to 0$) should approach
$\alpha = \VarMurrayAlpha$. (The duty cycle $\eta^*$ for electrotonic networks
requires a dedicated cable-theoretic evaluation and is not imported from the
vascular case.)

\item \emph{Pharmacological perturbation.}
Drugs that alter arterial compliance (e.g., nitric oxide donors, calcification
inhibitors) should produce measurable shifts in the branching exponent of small
arteries, predictable from the change in $\alpha_w$.

\item \emph{Scale-dependence vs. Global exponent.}
Because the optimal static transport exponent $\alpha_t(Q)$ varies slightly
across generations due to the sub-linear structural wall cost, the true optimum
for the network is technically scale-dependent. However, formulating the
Lagrangian at the full-network level ($G=\VarG$) intrinsically averages this
variation, collapsing it into a single, effective global exponent $\alpha^* =
\VarAlphaStar$. This accurately mirrors macroscopic clinical measurements, which
typically aggregate data across entire vascular subsystems to secure a single
robust fit.

\item \textit{Topological optimality of binary branching.}
The framework predicts that $N=2$ uniquely maximises the network
stiffness ratio $\kappa_{\mathrm{eff}}(N)$, with
$\kappa_{\mathrm{eff}}(N=2)/\kappa_{\mathrm{eff}}(N=3) \approx 2.5$
at $G=11$ (arising from the ${\sim}2.5\times$ lower wave curvature
of binary versus ternary junctions at $\alpha_w = 2.115$). This
quantitative advantage is directly testable: vascular beds with a
higher observed frequency of trifurcations should exhibit a
correspondingly lower effective stiffness ratio, measurable via
combined morphometric and pulse-wave velocity imaging.
\end{enumerate}

\paragraph{On the quasi-constancy of $k_t^\mathrm{net}(N)$.}
The $\VarKtVariation\%$ variation of $k_t^\mathrm{net}$ across $N \in [2, 6]$
(Table~\ref{tab:topo_N}) arises from a competition between two opposing effects:
(a)~the tree becomes shallower as $N$ grows, reducing the total number of
summation terms and suppressing the depth-squared weighting; (b)~the distal
vessels in a high-$N$ tree carry lower flow, placing them in the regime where
$\alpha^*(Q) \to (5+p)/2$ (Corollary~7 of the companion
paper~\cite{Marchesi2026static}), where the transport curvature $\Phi''(r^*)$ is
sharper. These two effects partially cancel, producing a quasi-constant
$k_t^\mathrm{net}$. The net result is that the topological competition is
dominated by the wave-cost curvature $k_w^\mathrm{net} \propto \ln N$, giving
$\kappa_\mathrm{eff}(N) \propto (\ln N)^{\VarHNScalingExp + 1}$, a strictly
decreasing function of $N$.

\subsection{Limitations}

The main limitations are:
(a)~the self-similar tree model is an idealization; real vascular trees exhibit
variable asymmetry and non-geometric length scaling;
(b)~the minimax condition assumes that both operational modes (wave signaling
and steady transport) contribute equally to evolutionary fitness at the network
level; in systems where one mode is demonstrably negligible, the framework
reduces to single-objective optimization;
(c)~the coronary evaluation uses simplified segment-length scaling; a refined
model incorporating order-specific morphological data \cite{Kassab1993} would
provide a more stringent test;
(d)~\emph{Topological decoupling from terminal load.}
The actual coronary tree exhibits a massive baseline wave reflection
($>\!80\%$) driven entirely by the resistive capillary boundary condition.
Crucially, the capillary radius is rigidly constrained by oxygen diffusion
limits (Krogh cylinder geometry) and cannot be reshaped by branching-geometry
optimization without inducing tissue hypoxia. The multiplicative Lagrangian
therefore acts as a mathematically exact decoupling filter: it is structurally
blind to boundary conditions fixed by diffusion physics, extracting
exclusively the topological transit penalty across the branching scaffold---the
only architectural degree of freedom genuinely available to evolutionary
selection. The minimax Lagrangian successfully isolates and optimizes this
$\VarWaveCostNet\%$ geometric transit penalty, operating strictly on top of the
unavoidable terminal reflection ($>\!80\%$) imposed by the resistive
microvascular boundary~\cite{Nichols2011,Westerhof1972}. The coherent
transfer-matrix analysis (Supplemental~S1) independently confirms that the
architecture-dependent geometric penalty $\Delta_\mathrm{coherent}$ is
decoupled from this terminal baseline and scales proportionally to
$\mathcal{C}^{\mathrm{net}}_{\mathrm{wave}}$ ($R = \VarCoherentCorr$).

\section{Conclusion}
\label{sec:conclusion}

We have derived a unified variational framework for branching transport networks
in which the branching exponent $\alpha^*$ emerges from the competition between
wave-integrity costs and viscous-dissipation costs at the network level.
The effective stiffness ratio $\kappa_\mathrm{eff}$ governing this competition
is not a freely tuned parameter but an emergent property of the tree
architecture, growing with the number of generations $G$.
For porcine coronary arteries, the minimax condition uniquely determines
$\alpha^* = \VarAlphaStar$ with zero fitted parameters, simultaneously
predicting a cumulative wave dissipation of $\VarWaveCostNet\%$ consistent with
independent clinical measurements. The implied Lagrange multiplier $\eta^* =
\VarEtaStar$ emerges as a derived quantity, not an input.
The theory generalizes to any impedance scaling exponent
$\alpha_w$---accommodating acoustic, electrotonic, and viscoelastic
conduits---and reduces exactly to Murray's law ($\alpha = 3$) only for networks
with pure steady flow ($\eta=0$) and negligible wall-tissue cost ($B_{wall}=0$).
Beyond numerical accuracy, the unified variational principle establishes binary
branching ($N=2$) as the unique topology that maximizes structural robustness
under simultaneous metabolic and wave constraints. By identifying the duty cycle
not as a tunable parameter but as an emergent structural property, the framework
provides a general recipe for computing $\kappa_\mathrm{eff}$ in arbitrary
branching systems, yielding testable predictions that link branching geometry to
independently measurable mechanical and architectural properties. This provides
the necessary definitive step to bridge the mathematical incommensurability
between local metabolic rates (extensive, in watts) and global reflection
coefficients (dimensionless) identified in the companion
study~\cite{Marchesi2026static}, formulating a single variational principle in
which both thermodynamic regimes are treated on equal footing.

\paragraph*{Data and Code Availability.}
All computation scripts, numerical data, and figure-generation code are openly
and unconditionally available at
\url{https://github.com/rikymarche-ctrl/vascular-networks-theory} under the CC
BY 4.0 Licence. No access request is required.

\bibliographystyle{unsrt}
\bibliography{references}

\end{document}

%% file: dynamic_variables.tex

\newcommand{\VarAlphaW}{2.115}                     
\newcommand{\VarP}{0.77}                           
\newcommand{\VarG}{11}                             
\newcommand{\VarBeta}{0.787}                       
\newcommand{\VarKe}{0.23}                          
\newcommand{\VarAlphaWKe}{2.230}                   
\newcommand{\VarAlphaWShift}{0.115}                

\newcommand{\VarAlphaT}{2.90}                      
\newcommand{\VarAlphaTExact}{2.9032}               
\newcommand{\VarAlphaStar}{2.72}                   
\newcommand{\VarEtaStar}{0.833}                    
\newcommand{\VarGradientRatio}{5}                  
\newcommand{\VarWaveCostNet}{6.3}                  
\newcommand{\VarAlphaLocal}{2.90}                  
\newcommand{\VarAlphaTLow}{2.90}                   
\newcommand{\VarAlphaTHigh}{2.94}                  
\newcommand{\VarAlphaTConsistency}{0.03}           
\newcommand{\VarAlphaExp}{2.70}                    
\newcommand{\VarAlphaExpErr}{0.20}                 
\newcommand{\VarPredError}{0.7}                    

\newcommand{\VarGammaAWOneFive}{0.0199}            
\newcommand{\VarGammaAWOneSevenFive}{0.0052}       
\newcommand{\VarGammaAWTwoZero}{0.0004}            
\newcommand{\VarGammaAWTwoFive}{0.0028}            
\newcommand{\VarGammaAWTwoSevenFive}{0.0064}       
\newcommand{\VarGammaAWThreeZero}{0.0104}          

\newcommand{\VarKappaGI}{0.00}                     
\newcommand{\VarAlphaStarGI}{2.12}                 
\newcommand{\VarKappaGV}{1.47}                     
\newcommand{\VarAlphaStarGV}{2.62}                 
\newcommand{\VarKappaGVII}{2.50}                   
\newcommand{\VarAlphaStarGVII}{2.67}               
\newcommand{\VarKappaGIX}{3.68}                    
\newcommand{\VarAlphaStarGIX}{2.70}                
\newcommand{\VarKappaGXI}{4.98}                    
\newcommand{\VarAlphaStarGXI}{2.72}                
\newcommand{\VarKappaGXIII}{6.38}                  
\newcommand{\VarAlphaStarGXIII}{2.74}              
\newcommand{\VarKappaGXV}{7.86}                    
\newcommand{\VarAlphaStarGXV}{2.75}                
\newcommand{\VarKappaGXX}{11.83}                   
\newcommand{\VarAlphaStarGXX}{2.78}                

\newcommand{\VarGNII}{11}                          
\newcommand{\VarKtNII}{1.470}                      
\newcommand{\VarKwNII}{0.295}                      
\newcommand{\VarKappaNII}{4.98}                    
\newcommand{\VarAlphaStarNII}{2.720}               
\newcommand{\VarSigmaNII}{+0.10}                   
\newcommand{\VarGNIII}{7}                          
\newcommand{\VarKtNIII}{1.384}                     
\newcommand{\VarKwNIII}{0.472}                     
\newcommand{\VarKappaNIII}{2.93}                   
\newcommand{\VarAlphaStarNIII}{2.683}              
\newcommand{\VarSigmaNIII}{-0.08}                  
\newcommand{\VarGNIV}{6}                           
\newcommand{\VarKtNIV}{1.601}                      
\newcommand{\VarKwNIV}{0.644}                      
\newcommand{\VarKappaNIV}{2.48}                    
\newcommand{\VarAlphaStarNIV}{2.673}               
\newcommand{\VarSigmaNIV}{-0.14}                   
\newcommand{\VarGNVI}{4}                           
\newcommand{\VarKtNVI}{0.968}                      
\newcommand{\VarKwNVI}{0.718}                      
\newcommand{\VarKappaNVI}{1.35}                    
\newcommand{\VarAlphaStarNVI}{2.621}               
\newcommand{\VarSigmaNVI}{-0.40}                   

\newcommand{\VarAlphaStarPulmonary}{2.766}         
\newcommand{\VarAlphaExpPulmonary}{2.75}           
\newcommand{\VarSigmaPulmonary}{0.1}               
\newcommand{\VarAlphaStarAorticLow}{2.75}          
\newcommand{\VarAlphaStarAorticHigh}{2.77}         
\newcommand{\VarSigmaAortic}{0.8}                  
\newcommand{\VarAlphaStarNeural}{2.3}              
\newcommand{\VarAlphaExpNeural}{2.4}               
\newcommand{\VarSigmaNeural}{0.3}                  

\newcommand{\VarKtGScalingExp}{2.50}               
\newcommand{\VarKappaGScalingExp}{1.50}            
\newcommand{\VarKtVariation}{40}                   
\newcommand{\VarAlphaSecondCrossing}{3.30}         

\newcommand{\VarBloodCostSens}{1500}               
\newcommand{\VarWallMetabSens}{20}                 
\newcommand{\VarViscositySens}{3.5}                
\newcommand{\VarQzeroSens}{5.2}                    
\newcommand{\VarSegLengthSens}{15}                 

\newcommand{\VarAlphaLocalMin}{2.8985}             
\newcommand{\VarAlphaLocalMax}{2.9065}             

\newcommand{\VarAlphaStarQone}{2.847}              
\newcommand{\VarDeltaAlphaLq}{0.126}               

\newcommand{\VarKwJunction}{0.030}                 
\newcommand{\VarKwJunctionPhys}{0.027}             

\newcommand{\VarHNScalingExp}{-2.3}                

\newcommand{\VarPLow}{0.72}                        
\newcommand{\VarPHigh}{0.82}                       
\newcommand{\VarPPulmonary}{0.6}                   

\newcommand{\VarGLow}{9}                           
\newcommand{\VarGHigh}{13}                         
\newcommand{\VarAlphaTLowBand}{2.80}               
\newcommand{\VarAlphaTHighBand}{3.00}              

\newcommand{\VarAlphaWPulmonary}{2.200}            
\newcommand{\VarGPulmonary}{15}                    
\newcommand{\VarAlphaExpErrPulmonary}{0.15}        
\newcommand{\VarAlphaExpPulmonaryLow}{2.70}        
\newcommand{\VarAlphaExpPulmonaryHigh}{2.80}       
\newcommand{\VarAlphaWAorticLow}{2.10}             
\newcommand{\VarAlphaWAorticHigh}{2.25}            
\newcommand{\VarGAortic}{15}                       
\newcommand{\VarAlphaExpAortic}{2.5}               
\newcommand{\VarAlphaExpErrAortic}{0.3}            
\newcommand{\VarAlphaWNeural}{1.50}                
\newcommand{\VarGNeural}{8}                        
\newcommand{\VarGAirways}{23}                      
\newcommand{\VarAlphaExpAirwaysLow}{2.8}           
\newcommand{\VarAlphaExpAirwaysHigh}{3.0}          
\newcommand{\VarMurrayAlpha}{3.0}                  

%% file: dynamic_variables_supplemental.tex

\newcommand{\VarElasticModulusMPa}{0.4}            
\newcommand{\VarEllFactor}{10}                     
\newcommand{\VarCPOWatts}{1.0}                     
\newcommand{\VarPulsatilePct}{15}                  
\newcommand{\VarCoronaryPct}{5}                    

\newcommand{\VarGammaTotalAW}{0.026}               
\newcommand{\VarCoherentCorr}{0.91}                
\newcommand{\VarCoherentShift}{0.000}              
\newcommand{\VarCoherentFactor}{12}                
\newcommand{\VarWomersleyMax}{2.3}                 

\newcommand{\VarFreqHarmonicI}{1.2}                
\newcommand{\VarGammaHarmonicI}{0.814}             
\newcommand{\VarFreqHarmonicII}{2.4}               
\newcommand{\VarGammaHarmonicII}{0.856}            
\newcommand{\VarFreqHarmonicIII}{3.6}              
\newcommand{\VarGammaHarmonicIII}{0.891}           
\newcommand{\VarFreqHarmonicIV}{4.8}               
\newcommand{\VarGammaHarmonicIV}{0.921}            

\newcommand{\VarPPulsemW}{7.5}                     
\newcommand{\VarPWavemW}{0.47}                     
\newcommand{\VarPBaselinemW}{90}                   
\newcommand{\VarPTransportmW}{5.7}                 
\newcommand{\VarPRatio}{0.08}                      
\newcommand{\VarTransportCostNet}{6.3}             